\newtheorem{thm}{Theorem}[section]
\newtheorem{prop}{Proposition}
\newtheorem{defn}{Definition}
\newtheorem{rem}{Remark}
\newtheorem{corollary}{Corollary}
\numberwithin{alg}{section}
\numberwithin{defn}{section}
\numberwithin{equation}{section}
\numberwithin{rem}{section}
\numberwithin{corollary}{section}
\numberwithin{table}{section}
\numberwithin{prop}{section}
\begin{document}

\title{Fast Estimation of True Bounds on Bermudan Option Prices Under Jump-diffusion Processes }
\author{Helin Zhu \quad Fan Ye \quad Enlu Zhou\\
Department of Industrial and Enterprise Systems Engineering\\
University of Illinois}
\date{\today}
\maketitle
% ----------------------------------------------------------------
\begin{abstract}
Fast pricing of American-style options has been a difficult problem since it was first introduced to financial markets in 1970s, especially when the underlying stocks' prices follow some jump-diffusion processes. In this paper, we propose a new algorithm to generate tight upper bounds on the Bermudan option price without nested simulation, under the jump-diffusion setting. By exploiting the martingale representation theorem for jump processes on the dual martingale, we are able to explore the unique structure of the optimal dual martingale and construct an approximation that preserves the martingale property. The resulting upper bound estimator avoids the nested Monte Carlo simulation suffered by the original primal-dual algorithm, therefore significantly improves the computational efficiency.  Theoretical analysis is provided to guarantee the quality of the martingale approximation.  Numerical experiments are conducted to verify the efficiency of our proposed algorithm.
\\
\\
Key words: jump-diffusion processes, Bermudan option, duality theory, Monte Carlo simulation.
\end{abstract}
\newpage
% ----------------------------------------------------------------
\section{Introduction}
Pricing American-style derivatives (which is essentially an optimal stopping problem) has been an active and challenging problem in the last thirty years, especially when the underlying stocks' prices follow some jump-diffusion processes, as they become more and more critical to investors. To present time, various jump-diffusion models for financial modelling have been proposed to fit the real data in financial markets, including: (i) the normal jump-diffusion model, see \cite{merton1976option}; (ii) the affine jump-diffusion models, see \cite{duffie2000transform}; (iii) the jump models based on Levy processes, see \cite{cont2003financial}; (iv) the double exponential, mixed-exponential and hyper-exponential jump-diffusion models, see \cite{kou2002jump}, \cite{cai2011option}, and \cite{cai2012option}. All these models are trying to capture some interesting features of the market behaviour that cannot be well explained by pure-diffusion models, such as the heavy-tail risk suffered by the market. In general, closed-form expressions for the American-style derivatives can hardly be derived under these jump-diffusion models due to the multiple exercise opportunities and the randomness in the underlying asset price caused by both jumps and diffusions. Hence, various numerical methods have been proposed to tackle the American-style option pricing problems under the jump-diffusion models, including: (i) solving the free boundary problems via lattice or differential equation methods, see \cite{amin1993jump}, \cite{kellezi2004valuing}, \cite{feng2008pricing}, \cite{fang2009pricing}, \cite{Feng2012hilbert}; (ii) quadratic approximation and piece-wise exponential approximation methods, see \cite{pham1997optimal}, \cite{gukhal2001analytical}, \cite{kou2004option}. A thorough study on jump-diffusion models for asset pricing has been done by \cite{kou2008jump}. More broadly, an elegant overview of financial models under jump processes is provided in \cite{cont2003financial}.\\
\\
Another class of widely-used methods are based on Monte Carlo simulation, and they have been successfully implemented on Bermudan option pricing problems under the pure-diffusion models, see \cite{bossaerts1989simulation}, \cite{tilley:1993}, \cite{longstaff:2001}, \cite{tsitsiklis:2001}. In particular,  \cite{longstaff:2001}, \cite{tsitsiklis:2001} propose to approximate the continuation values by regression on certain basis functions set (called ``function basis''), which leads to good suboptimal exercise strategies and lower bounds on the exact option price. Moreover, their methods bypass ``the curse of dimensionality'' and scale well with the number of underlying variables, working efficiently for high-dimensional problems under the pure-diffusion models. Though these methods can be naturally adapted to option pricing problems under the jump-diffusion setting, two key questions regarding the effectiveness of these methods remain to be addressed: (i) how to choose the function bases for regression. (ii) how to measure the quality of the lower bounds. \\
\\
The second question is partially addressed by the dual approach proposed independently by \cite{rogers:2002}, \cite{haugh:2004}, and \cite{andersen:2004}. They are able to generate the upper bounds on the option price by solving the associated dual problem, which is obtained by subtracting the payoff function by a dual martingale adapted to a proper filtration. In theory, if the dual martingale is the Doob-Meyer martingale part of the option price process, namely the ``optimal dual martingale'', then the resulting upper bound equals the exact option price. In practice, the optimal dual martingale is not available, but good approximations of it can generate tight upper bounds.  With the access to the upper bounds, the quality of suboptimal exercise strategies or lower bounds could be measured empirically by looking at the duality gaps, which are the differences between the lower bounds and the upper bounds. A multiplicative version of dual approach based on multiplicative Doob-Meyer decomposition is proposed by \cite{Jamshidian:2006}. A thorough comparison between the additive dual approach and the multiplicative dual approach can be found in \cite{Chen:2007}.  \cite{glasserman:2004} provides an elegant and thorough overview on the duality theory for option pricing problems.\\
\\
A lot work has been done following the duality theory. To name a few, \cite{bender2011dual}, \cite{chandramouli:2012}, and \cite{Bender:2011} develop the multilevel primal-dual approach for optimal stopping problems with multiple stopping times. \cite{belomestny2012multilevel} optimize the cost of simulation by considering a multilevel Monte Carlo technique for the primal-dual approach. \cite{desai2012pathwise} consider an additional path-wise optimization procedure in constructing the dual martingales for optimal stopping problems. \cite{ye2013pricing} apply the primal-dual approach with particle filtering techniques to optimal stopping problems of partially observable Markov processes. \cite{rogers:2007}, \cite{brown:2010} generalize the duality theory to general discrete-time dynamic programming problems and provide a broader interpretation of the dual martingale. From \cite{brown:2010}'s perspective, the dual martingale can be regarded as the penalty for the access to the future information (information relaxation) and different degrees of relaxation result in different levels of upper bounds. In particular, the dual martingales constructed by \cite{haugh:2004}, \cite{andersen:2004} can be interpreted as perfect information relaxation, which means the option holder has access to all the future prices of the underlying assets. \cite{ye2012parameterized} consider a parameterized path-wise optimization technique in constructing the penalties for general dynamic programming problems. \cite{Ye2013} also develop the duality theory for general dynamic programming problems under a continuous-time setting.\\
\\
The numerical effectiveness of the primal-dual approach has been demonstrated in pricing multidimensional American-style options. The algorithm generates good suboptimal exercise strategies and good lower-upper bound pairs with sufficiently small duality gaps. A possible deficiency of the algorithm is the heavy computation (quadratic in the number of exercisable periods), caused by the nested simulation in constructing the dual martingale. To address the computational issue, \cite{Belomestny:2009} propose an alternative algorithm to generate approximations of the optimal dual martingale via non-nested simulation under the Wiener process setting. By exploiting the martingale representation theorem on the optimal dual martingale driven by Wiener processes, they are able to approximate the optimal dual martingale through regressing the integrand on some function bases at finite number of time points. The resulting approximation preserves the martingale property and generates a true upper bound on the option price. More importantly, their algorithm avoids the nested simulation and is linear in the number of exercisable periods.\\
\\
In this paper, we will generalize \cite{Belomestny:2009}'s idea of ``true martingale'' to Bermudan option pricing problems under jump-diffusion processes and provide a new perspective in understanding the structure of the optimal dual martingale, which facilitates us to construct good approximations of it. According to our knowledge, we are among the first to ever consider estimating the upper bounds on American-style option price under the jump-diffusion models. In a greater detail, we have made the following contributions.
\begin{itemize}
\item Under the jump-diffusion setting, we explore the structure of the optimal dual martingales in the dual representation of both the Bermudan and American option prices (Theorem \ref{thm:martingale} and Theorem \ref{thm:derivative}), which is the underpinning of our proposed approach of generating tight upper bounds.
\item Motivated by \cite{Belomestny:2009}, we propose a new algorithm, which is referred as the "true martingale algorithm" (T-M algorithm), to compute the upper bounds on the Bermudan option price under the jump-diffusion models.  The resulting approximation (called ``true martingale approximation'') preserves the martingale property, and therefore generates true upper bounds on the Bermudan option price. Moreover, compared with the primal-dual algorithm proposed by \cite{andersen:2004} (A-B algorithm), our proposed T-M algorithm avoids the nested Monte Carlo simulation and scales linearly with the exercisable periods, and hence achieves a higher computational efficiency.
\item We prove that the true martingale approximation converges to the objective martingale in the mean square sense provided that the time discretization goes to zero by bounding the empirical difference between the approximation and the objective martingale (Theorem \ref{thm:3:4}).
\item We investigate the numerical effectiveness of \cite{longstaff:2001}'s least-squares regression approach (L-S algorithm) for Bermudan option price under the jump-diffusion models. In particular, we find that by incorporating the European option price under the corresponding pure-diffusion model (referred as the ``non-jump European option'') into the function basis of the L-S algorithm, the quality of the induced suboptimal exercise strategies and the lower bounds can be significantly improved.
\item Motivated by the explicit structure of the optimal dual martingale (Theorem \ref{thm:martingale}), we propose a function basis that can be employed in our proposed algorithm to obtain upper bounds on the option price.  This function basis is also derived based on the non-jump European option price, which is critical to the true martingale approximation and hence the quality of the true upper bounds. By implementing our algorithm together with the A-B algorithm on several sets of numerical experiments, the numerical results  demonstrate that both methods can generate tight and stable upper bounds on option price of the same quality; however, we observe that our algorithm is much more efficient than the A-B algorithm in practice due to the relief from nested simulation.
\end{itemize}
To summarize, the rest of this paper will be organized as follows. In section 2, we describe the Bermudan option pricing problems under general jump-diffusion models and review the dual approach. We develop the true martingale approach and provide error analysis and convergence analysis of it in section 3. Section 4 focuses on the detailed T-M algorithm and its numerical advantages.  Numerical experiments are conducted in section 5 to verify the computational efficiency of the T-M algorithm. Conclusion and future directions are given in section 6.

\section{Model Formulation}

\subsection{Preliminaries}
In this subsection, we will introduce some standard definitions in jump processes that will appear throughout the paper. They can be found in \cite{cont2003financial}.
\begin{defn}[\textbf{Poisson random measure} (Definition 2.18 in \cite{cont2003financial}]\label{def:2.1}
Let $(\Omega, \mathcal{F}, \mathbb{P})$ be a probability space, $G \subset \mathbb{R}^{d+1} $ and $\mu$ be a given (positive) Radon measure on $(G, \mathcal{G})$. A Poisson random measure on $G$ with intensity $\mu$ is an integer-valued random measure:
\begin{equation*}
\begin{split}\mathcal{P}:
& \Omega \times \mathcal{G} \rightarrow \mathbb{N}\\
& \left(\omega, A\right)\mapsto \mathcal{P}\left(\omega, A\right)
\end{split}
\end{equation*}
such that: (i) For (almost all) $\omega \in \Omega, \mathcal{P}(\omega, \cdot)$ is an integer-valued Radon measure on $G$; (ii) for each measurable set $A\in \mathcal{G}$, $\mathcal{P}(\cdot, A)$ is a Poisson random variable with parameter $\mu(A)$:
\[Pr\biggl(\mathcal{P}\left(\cdot, A\right)=k\biggr)=e^{-\mu(A)}\frac{\left(\mu(A)\right)^{k}}{k!},\quad \forall \;k \in \mathbb{N};\]
(iii) for disjoint measurable sets ${A_{1},...,A_{n}}\in \mathcal{G}$, the variables $\mathcal{P}(\cdot, A_{1}),...,\mathcal{P}(\cdot, A_{n})$ are independent.
\end{defn}
To parallel with the Wiener measure, we further define the associated compensated Poisson random measure as follows.
\begin{defn}[\textbf{compensated Poisson random measure}]\label{def:2.2}
Assuming $\mathcal{P}(\cdot, \cdot)$ is a Poisson random measure with the intensity Radon measure $\mu(\cdot)$, then the corresponding compensated Poisson random measure can be constructed by subtracting $\mathcal{P}(\cdot, \cdot)$ by its intensity measure:
\begin{equation*}
\begin{split}\tilde{\mathcal{P}}:
& \Omega \times \mathcal{G} \rightarrow \mathbb{R}\\
& \left(\omega, A\right)\mapsto \tilde{\mathcal{P}}\left(\omega, A\right)=\mathcal{P}(\omega, A)-\mu(A).
\end{split}
\end{equation*}
\end{defn}
From Definition \ref{def:2.2}, one can easily obtain that, for $A \in \mathcal{G}$, $E[\tilde{\mathcal{P}}(\cdot, A)]=0$ and $Var[\tilde{\mathcal{P}}(\cdot, A)]=Var[\mathcal{P}(\cdot, A)]=\mu(A)$. Therefore, we call $\tilde{\mathcal{P}}(\cdot, A)$ a compensated Poisson random variable. Clearly, compensated Poisson random variables parallel to normal random variables with mean zero. To connect Poisson random measure with jump processes, we summarize some results obtained by \cite{cont2003financial} in the following Theorem \ref{thm:jumpprocess}.
\begin{thm}\label{thm:jumpprocess}
Suppose a Poisson random measure $\mathcal{P}( ds, dy)$ on $G=[0,T]\times \mathbb{R}^{d}$ with the intensity measure $\mu(ds\times dy)$ is described as the counting measure associated with a random configuration of points $(T_{n}, Y_{n})\in G$:
\[\mathcal{P}=\sum\limits_{n\geq 1}\delta_{(T_{n}, Y_{n})},\]
where $(T_{n}(\omega), Y_{n}(\omega))\in [0,T]\times \mathbb{R}^{d}$ corresponds to an observation made at time $T_{n}(\omega)$ and described by a random variable $Y_{n}(\omega)$. $f(s,y)$ is a $\mu$-measurable function. Then
\[X(t)\quad=\int_{0}^{t}\int_{\mathbb{R}^{d}}f(s,y)\mathcal{P}( ds, dy),\quad 0 \le t \le T,\]
is a jump process whose jumps happen at the random times $T_{n}$ and have amplitudes given by $f(T_{n},Y_{n})$. Furthermore, the corresponding compensated jump process
\[\tilde{X}(t)\quad=\int_{0}^{t}\int_{\mathbb{R}^{d}}f(s,y)\tilde{\mathcal{P}}( ds, dy),\quad 0 \le t \le T,\] is a martingale adapted to the augmented filtration generated by $\mathcal{P}$.
\end{thm}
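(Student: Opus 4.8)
The plan is to prove the two assertions in turn: first identify the pathwise jump structure of $X$, then verify that the compensated process $\tilde X$ is a martingale. First I would exploit the atomic representation of $\mathcal{P}$. Since $\mathcal{P} = \sum_{n\ge 1}\delta_{(T_n,Y_n)}$, integrating against $\mathcal{P}$ reduces to summing $f$ over the atoms lying in $[0,t]\times\mathbb{R}^d$, so that
\[
X(t) \;=\; \int_0^t\!\int_{\mathbb{R}^d} f(s,y)\,\mathcal{P}(ds,dy) \;=\; \sum_{n\,:\,T_n\le t} f(T_n,Y_n).
\]
This representation is essentially the whole content of the first claim: $X$ is piecewise constant and right-continuous, constant on each interval between successive jump times, and at each $T_n\le t$ it jumps by exactly $\Delta X(T_n)=f(T_n,Y_n)$, while $\Delta X(t)=0$ for $t\notin\{T_n\}$. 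Hence $X$ is a jump process with the stated jump times and amplitudes.

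Next I would turn to $\tilde X$. Writing $\tilde X(t)=X(t)-\int_0^t\!\int_{\mathbb{R}^d} f(s,y)\,\mu(ds\times dy)$, adaptedness to the augmented filtration $(\mathcal{F}_t)$ is immediate, since $\tilde X(t)$ depends only on the restriction of $\mathcal{P}$ to $[0,t]\times\mathbb{R}^d$. Under the standing integrability hypothesis on $f$ (namely $\int_0^T\!\int_{\mathbb{R}^d}|f|\,d\mu<\infty$, or square-integrability together with the compensated isometry), $\tilde X(t)$ is integrable. For the martingale property, fix $0\le s<t\le T$ and write the increment as
\[
\tilde X(t)-\tilde X(s) \;=\; \int_{(s,t]\times\mathbb{R}^d} f(u,y)\,\tilde{\mathcal{P}}(du,dy),
\]
which is a functional of $\mathcal{P}$ restricted to the strip $(s,t]\times\mathbb{R}^d$. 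Because this strip is disjoint from $[0,s]\times\mathbb{R}^d$, the independence over disjoint sets (property (iii) of Definition \ref{def:2.1}) forces this increment to be independent of $\mathcal{F}_s$, so that $E[\tilde X(t)-\tilde X(s)\mid\mathcal{F}_s]=E[\tilde X(t)-\tilde X(s)]$.

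It then remains to show this unconditional expectation vanishes, and here the plan is the usual layer-cake argument. For a simple integrand $f=\sum_i c_i\,\mathbf{1}_{A_i}$ with disjoint $A_i$, the integral reduces to $\sum_i c_i\,\tilde{\mathcal{P}}\!\left(A_i\cap((s,t]\times\mathbb{R}^d)\right)$, whose expectation is zero because $E[\tilde{\mathcal{P}}(\cdot,A)]=0$ for every measurable $A$, as recorded after Definition \ref{def:2.2}. I would then pass from simple functions to a general $\mu$-measurable $f$ by the standard approximation (a monotone-class or dominated-convergence argument in the $L^1$ setting, or the compensated $L^2$ isometry), concluding $E[\tilde X(t)-\tilde X(s)]=0$ and hence $E[\tilde X(t)\mid\mathcal{F}_s]=\tilde X(s)$.

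The main obstacle I anticipate is this final limiting step: one must verify that the approximating simple integrals converge to $\tilde X(t)-\tilde X(s)$ in a mode strong enough to exchange limit and conditional expectation, which is precisely where the integrability hypothesis on $f$ does the real work. By contrast, the independence of increments and the mean-zero property of $\tilde{\mathcal{P}}$ are essentially immediate from the defining properties of $\mathcal{P}$ and the construction of $\tilde{\mathcal{P}}$, so the analytic content of the theorem is concentrated entirely in controlling this approximation.
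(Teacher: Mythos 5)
The paper offers no proof of this theorem at all --- it is explicitly quoted as a summary of results from \cite{cont2003financial} --- and your argument is precisely the standard one given in that reference: pathwise atomic summation of $f$ over the points of $\mathcal{P}$ for the first claim, and then adaptedness, independence of increments over disjoint strips $(s,t]\times\mathbb{R}^d$ versus $[0,s]\times\mathbb{R}^d$, and the mean-zero property of $\tilde{\mathcal{P}}$ established first for simple integrands and extended by approximation. Your explicit flagging of the integrability hypothesis on $f$ (which the paper's statement omits, asserting only $\mu$-measurability) is a point in your favor rather than a gap, since without it the compensated integral need not even be well defined.
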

%Based on Poisson random measure and compensated Poisson random measure, we can define the following generalized Poisson process and compensated generalized Poisson process.
%\begin{defn}[Generalized Poisson process]\label{def:2.3}
%Assume $Q$ is a random variable with density distribution $f:\mathbb{R}^{d}\rightarrow \mathbb{R}$. Then $P(t,Q)$ is a generalized Poisson process if: (i)
%\end{defn}
Basically, Definitions \ref{def:2.1} and \ref{def:2.2}, Theorem \ref{thm:jumpprocess} summarize the basic properties possessed by Poisson random measure, and characterize the close connection between Poisson random measures and jump processes, hence providing an intuitive understanding towards the construction of the Poisson random measures induced by jump processes. Overall, the compensated Poisson random measure possesses zero-mean and independent increment properties. With these useful tools, we next formally describe the Bermudan option pricing problem under a general jump-diffusion model.
\subsection{Primal problem}
In this paper, we consider a special case of asset price models$-$jump-diffusion processes, i.e., the asset price $X(t)$ satisfies the following stochastic differential equation (SDE):
\begin{equation}\label{eq:2:1}
dX\left( t \right) = b\left( {t,X\left( t \right)} \right)dt + \sigma \left( {t,X\left( t \right)} \right)dW\left( t \right) + \int_{\mathbb{R}^{d}} {J\left( {t,X\left( t \right),y} \right)} \mathcal{P}\left( {dt,dy} \right),
\end{equation}
where $t\in[0,T]$, $X(t) = \left[ X_{1}(t),...,X_{n}(t) \right]$ is a random process with a given initial deterministic value $X(0)=X_{0}\in \mathbb{R}^{n}$, $W(t) = {\left[ W_{1}(t),...,W_{n_{w}}(t)\right]}$ is a standard vector Wiener process, $\mathcal{P}(dt,dy)$ is the Poisson random measure defined on $[0,T]\times\mathbb{R}^{d} \subset {\mathbb{R}^{d+1}} $ with the intensity measure $\mu(dt\times dy)$, the coefficient $b$, $\sigma$ and $J$ are functions $ b:\mathbb{R}\times {\mathbb{R}^n} \to {\mathbb{R}^n}$, $\sigma :\mathbb{R} \times {\mathbb{R}^n} \to {\mathbb{R}^n} \times {\mathbb{R}^{{n_w}}}$ and $J:\mathbb{R} \times {\mathbb{R}^n} \times {\mathbb{R}^d} \to {\mathbb{R}^n}$ satisfying mild continuity conditions (such as uniformly Lipschitz continuous or Holder continuous). Throughout $\mathcal{F}=\{\mathcal{F}_{t}:0\leq t \leq T\}$ is the augmented filtration generated by the Wiener process $W(t)$ and the Poisson random measure $\mathcal{P}$.\\
\\
We consider a Bermudan option based on $X(t)$ that can be exercised at any date from the time set $\Xi  = \{ {T_0},{T_1},...,{T_\mathcal{J} }\}$, with $T_{0}=0$ and $T_{\mathcal{J}}=T$. Given a pricing measure $\mathbb{Q}$ and the filtration $\mathcal{F}$, when exercising at time $T_{j}\in \Xi$, the holder of the option will receive a discounted payoff:
\begin{equation}\label{eq:2:2}
{H_{{T_j}}}\;: = h\left( {{T_j},X\left( {{T_j}} \right)} \right),
\end{equation}
where $h\left( {{T_j},\; \cdot } \right)$ is a Lipschitz continuous function. Our problem is to evaluate the price of the Bermudan option, that is, to find
\begin{equation}\label{eq:2:3}
Primal: \quad\quad    V_{0}^ {\ast}=\sup_{\tau \in \Xi}E\left[h\left(\tau,X\left(\tau \right)\right)|X\left(0\right)=X_{0}\right],
\end{equation}
where $\tau$ is an exercise strategy (in this case, a stopping time adapted to the filtration $\{\mathcal{F}_{T_{j}}:j=0,...,\mathcal{J}\}$) taking values in $\Xi$, and $V_{0}^ {\ast}$ denotes the Bermudan option price at time $T_{0}$ given the initial asset price $X_{0}$.\\
\\
As we stated in the previous section, \cite{longstaff:2001} manage to construct a suboptimal exercise strategy $\tilde{\tau}$ and generate a lower bound $V_{0}^{\tilde{\tau}}$ on the exact option price $V_{0}^ {\ast}$ via a least-squares regression-based approach. We omit the details of their approach and focus on the following dual approach.
\subsection{Review of Dual Approach}

Let $M=\{M_{T_{j}}:j=0,...,\mathcal{J}\}$ with $M_{0}=0$ be a martingale adapted to the filtration $\{\mathcal{F}_{T_{j}}:j=0,...,\mathcal{J} \}$ and $\mathcal{M}$ represents the set of all such martingales. \cite{andersen:2004}, \cite{haugh:2004} show that the dual problem:
\begin{equation}\label{eq:2:4}
Dual:\quad\quad \inf_{M\in \mathcal{M}} \left(E\left[ \max_{0 \le j \le \mathcal{J}}\left( H_{T_{j}} - M_{T_{j}}\right)|X\left(0\right)=X_{0} \right]\right)
\end{equation}
yields the exact option price $V_{0}^ {\ast}$.  Moreover, if we let $M_{T_{j}}$ in \eqref{eq:2:4} be the Doob-Meyer martingale part of the discounted Bermudan price process $V_{T_{j}}^ {\ast}$, denoted by $M_{T_{j}}^{\ast}$, then the infimum in \eqref{eq:2:4} is achieved. Precisely, we have:
\begin{equation}\label{eq:2:5}
 V_{0}^ {\ast} =\quad E\left[ \max_{0 \le j \le \mathcal{J}}\left( H_{T_{j}} - M^{\ast}_{T_{j}}\right)|X\left(0\right)=X_{0} \right].
\end{equation}
In practice, the optimal dual martingale is not accessible to us. Nevertheless, we can still obtain an upper bound with an arbitrary $M \in \mathcal{M}$ via
\begin{equation}\label{eq:2:6}
V_{0}^ {up} \left(M \right)=\quad E\left[ \max_{0 \le j \le \mathcal{J}}\left( H_{T_{j}} - M_{T_{j}}\right)|X\left(0\right)=X_{0} \right].
\end{equation}
It is reasonable to expect that, if $M_{T_{j}}$ is the martingale induced by a good approximation, $V_{T_{j}}$,  of the option price process $V_{T_{j}}^ {\ast}$, then $M_{T_{j}}$ is close to the optimal dual martingale $M_{T_{j}}^{\ast}$ and the resulting upper bound $V_{0}^ {up} \left(M \right)$ should be close to the exact option price $V_{0}^ {\ast}$. Specifically, suppose $V=\{V_{{T_j}}:j=0,...,\mathcal{J}\}$ is some approximation of $V^{\ast}=\{V^{\ast}_{{T_j}}:j=0,...,\mathcal{J}\}$. Consider the following Doob-Meyer decomposition:
\begin{equation}\label{eq:2:7}
{V_{{T_j}}} = {V_0} + {M_{{T_j}}} + {U_{{T_j}}}, \quad j=0,...,\mathcal{J},
\end{equation}
where $V_{0}$ is the approximation of the Bermudan option price at time $T_{0}$ and ${U_{{T_j}}}$ is the residual predictable process. Then we can obtain the martingale component ${M_{{T_j}}}$ in principle via the following recursion:
\begin{equation}\label{eq:2:8}
\left\{\begin{array}{l}
 M_{0}=0,
\\M_{T_{j+1}}=M_{T_{j}}+V_{T_{j+1}}-E_{T_{j}}\left[V_{T_{j+1}}\right].
\end{array}\right.
\end{equation}
where ${E_{{T_j}}}\left[\cdot\right]$ means the conditional expectation is taken with respect to the filtration $\mathcal{F}_{T_{j}}$, i.e., ${E_{{T_j}}}\left[\cdot\right]={E_{{T_j}}}\left[\cdot|\mathcal{F}_{T_{j}}\right]$.\\
\\
\cite{haugh:2004}, \cite{andersen:2004} both use the above theoretical result as the starting point of their algorithms to the upper bounds. The difference between their approaches lies in the ways of generating dual martingales. \cite{haugh:2004} try to first approximate ${V^{\ast}}$ directly by regressing it on certain function basis and then induce the dual martingale by inner simulation, while \cite{andersen:2004} try to first approximate the optimal exercise strategy $\tau^{\ast}$ by a suboptimal exercise strategy $\tilde{\tau}$, then generate the approximation ${V^{\tilde{\tau}}}$ of the option price and the corresponding dual martingale by inner simulation. Nevertheless, due to the nested simulation in approximating the conditional expectation in \eqref{eq:2:8}, both of their algorithms lose the martingale property. Thus the resulting upper bounds are not guaranteed to be true upper bounds. Furthermore, the nested simulation requires huge computational effort. Under limited computational resources, this approach might not be realistic. To overcome these deficiencies, we next develop a new approach to generate a martingale approximation that preserves the martingale property, in a non-nested simulation manner. We expect the resulting upper bounds to be true upper bounds and the computational effort to be significantly reduced.
\section{True Martingale Approach via Non-nested Simulation}
In this section, we will develop an approach that is fundamentally different from previous approaches by \cite{haugh:2004}, \cite{andersen:2004}. By exploiting the special structure of martingales jointly driven by Wiener measure and Poisson random measure, we are able to construct an approximation of $M$ without nested simulation, and thus preserves the martingale property. The following generalized martingale representation theorem provides the intuitive idea in understanding the unique structure of such martingales.
\begin{thm}[\emph{\textbf{Martingale Representation Theorem}}]\label{thm:martingale}
Fix $T>0$. Let $\{W(t):0\leq t\leq T\}$ be a $n_{w}$-dimensional Wiener process and $\mathcal{P}$ be a Poisson random measure on $\left[ {0,T} \right] \times {\mathbb{R}^d}$ with intensity $\mu(dt \times dy)$, independent from $W(t)$. If $M = \{ {M_{T_{j}}:j=0,...,\mathcal{J}}\}$ is a locally square-integrable (real-valued) martingale adapted to the filtration $\{\mathcal{F}_{T_{j}}:j=0,...,\mathcal{J}\}$ with deterministic initial value $M_{0}=0$, then there exist a predictable process $\phi :\;\;\Omega  \times \left[ {0,T} \right] \to \mathbb{R}^{n_{w}} $ and a predictable random function $\psi :\;\;\Omega  \times \left[ {0,T} \right] \times {\mathbb{R}^d} \to \mathbb{R}$ such that
\begin{equation}\label{eq:3:1}
M_{T_{j}} = \int_{0}^{T_{j}} {{\phi _s}dW\left( s \right)}  + \int_0^{T_{j}} {\int_{{\mathbb{R}^d}} {\psi \left( {s,y} \right)} } \tilde{\mathcal{P}}\left( {ds,dy} \right), \quad \quad j=0,...,\mathcal{J},
\end{equation}
where $\tilde{\mathcal{P}}$ is the compensated Poisson random measure induced by $\mathcal{P}$.
\end{thm}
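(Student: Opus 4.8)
The plan is to prove this as a discrete-time sampling of the continuous-time martingale representation theorem for jump-diffusions. Since $M_{T_j}$ is only indexed over the discrete exercise dates $\Xi$, the natural strategy is to first establish the representation for the continuous-time martingale and then restrict to the dates $T_j$. Concretely, I would interpolate $M$ into a continuous-time martingale: define $M_t := E[M_T \mid \mathcal{F}_t]$ for $t \in [0,T]$, which is a genuine continuous-time (locally square-integrable) martingale adapted to $\mathcal{F}$ and agrees with the given $M_{T_j}$ at each exercise date by the tower property and the martingale property of $M$. This reduces the claim to representing $M_t$ and then evaluating the stochastic integrals at $t = T_j$.

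\smallskip

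The core of the argument is the predictable representation property of the filtration $\mathcal{F}$ generated jointly by the Wiener process $W$ and the Poisson random measure $\mathcal{P}$. First I would invoke the standard martingale representation theorem for filtrations generated by a Brownian motion together with an independent Poisson random measure (see, e.g., the treatment in \cite{cont2003financial}): any locally square-integrable martingale $M$ with $M_0 = 0$ adapted to such a filtration admits a decomposition into a continuous part, represented as an Ito integral against $W$, and a purely discontinuous part, represented as a compensated Poisson integral against $\tilde{\mathcal{P}}$. This yields predictable integrands $\phi_s$ and $\psi(s,y)$ with
\begin{equation*}
M_t = \int_0^t \phi_s \, dW(s) + \int_0^t \int_{\mathbb{R}^d} \psi(s,y)\, \tilde{\mathcal{P}}(ds,dy), \qquad 0 \le t \le T.
\end{equation*}
The square-integrability conditions $E\bigl[\int_0^T |\phi_s|^2 ds\bigr] < \infty$ and $E\bigl[\int_0^T \int_{\mathbb{R}^d} |\psi(s,y)|^2 \mu(ds \times dy)\bigr] < \infty$ guarantee both integrals are well-defined and are themselves martingales, the second being a martingale by Theorem \ref{thm:jumpprocess}. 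Evaluating at $t = T_j$ and matching with the interpolated values then gives \eqref{eq:3:1} on $\Xi$.

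\smallskip

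The main obstacle is verifying that the joint filtration generated by an independent Wiener process and Poisson random measure actually possesses the predictable representation property, since this is where the independence hypothesis and the precise structure of $\mathcal{F}$ enter. The cleanest route is to appeal to the chaos-decomposition or the direct representation results available for such combined filtrations: the key fact is that the orthogonal complement of the stochastic integrals against $W$ and $\tilde{\mathcal{P}}$ within the space of square-integrable martingales is trivial, which follows from the independence of $W$ and $\mathcal{P}$ together with the individual representation properties of each noise source. I would therefore structure the proof as (i) construct the continuous-time interpolation $M_t$, (ii) cite the predictable representation theorem for the combined Wiener--Poisson filtration, (iii) confirm the integrability of $\phi$ and $\psi$ from local square-integrability of $M$, and (iv) specialize to the exercise dates. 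The delicate point worth spelling out is the orthogonality/completeness argument in step (ii); the remaining steps are largely bookkeeping once that representation is in hand.
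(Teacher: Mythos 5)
Your proposal is correct and follows essentially the same route as the paper: the paper applies the representation result from \cite{cont2003financial} (Proposition 9.4) to the terminal value $M_T$ and then recovers $M_{T_j}=E[M_T\mid\mathcal{F}_{T_j}]$ by truncating the stochastic integrals, which is exactly your interpolation $M_t:=E[M_T\mid\mathcal{F}_t]$ followed by evaluation at the exercise dates. The only difference is that you offer to justify the predictable representation property for the joint Wiener--Poisson filtration, whereas the paper simply cites it.
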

\begin{proof} According to Proposition 9.4 in \cite{cont2003financial}, for the random variable $M_{T}$, there exist a predictable process $\phi :\;\;\Omega  \times \left[ {0,T} \right] \to \mathbb{R}^{n_{w}} $ and a predictable random function $\psi :\;\;\Omega  \times \left[ {0,T} \right] \times {\mathbb{R}^d} \to \mathbb{R}$ such that
\begin{equation*}
M_{T} = \int_{0}^{T} {{\phi _s}dW\left( s \right)}  + \int_0^{T} {\int_{{\mathbb{R}^d}} {\psi \left( {s,y} \right)} } \tilde{\mathcal{P}}\left( {ds,dy} \right), \end{equation*}
where $\tilde{\mathcal{P}}$ is the compensated Poisson random measure induced by $\mathcal{P}$. Given that $M$ is a martingale adapted to the filtration $\{\mathcal{F}_{T_{j}}:j=0,...,\mathcal{J}\}$, and according to the zero-mean and independent increment properties of Wiener measure $W$ and compensated Poisson random measure $\mathcal{\tilde{P}}$, we have
\begin{equation*}
M_{T_{j}} = E\left[M_{T}|\mathcal{F}_{T_{j}}\right]=\int_{0}^{T_{j}} {{\phi _s}dW\left( s \right)}  + \int_0^{T_{j}} {\int_{{\mathbb{R}^d}} {\psi \left( {s,y} \right)} } \tilde{\mathcal{P}}\left( {ds,dy} \right), \quad \mbox{for} \quad j=0,...,\mathcal{J}.
\end{equation*}\qedhere
\end{proof}
Theorem \ref{thm:martingale} can be interpreted as a generalization of the martingale representation theorem for martingales driven by Wiener processes. Indeed, if the intensity $\mu(dt \times dy)$ equals zero, then Theorem \ref{thm:martingale} reduces to the classic martingale representation theorem. Moreover, similar to the Wiener measure $W$, the compensated Poisson random measure $\tilde{\mathcal{P}}$ possesses the zero-mean and independent increment properties, which are essential for the true martingale approximation to preserve the martingale property, as we will elaborate on this point later. Nevertheless, Theorem \ref{thm:martingale} fails to provide an intuitive understanding towards the explicit relationship between $\phi_{t}$, $\psi(t,y)$ and the martingale $M$ itself. We complement this deficiency of Theorem \ref{thm:martingale} by explicitly expressing the integrands as functions of the process that induces the martingale in the following Theorem \ref{thm:derivative}.

\begin{thm}\label{thm:derivative}
Suppose $X_{t}$ follows \eqref{eq:2:1} and induces the jump measure $\mathcal{P}_{X}$, which is a Poisson random measure. Consider the American option price process (continuous-time) $(V_{t})_{0\le t\le T}$ with payoff of the form $h(\cdot,\cdot)$ in \eqref{eq:2:2}. Assuming $V_{t}=v^{a}(t,X(t))$ is a $C^{2}$ function in $X$ and its two partial derivatives are bounded by a constant, then the martingale component of $(V_{t}-V_{0})_{0\le t\le T}$, denoted by $(M_{t})_{0\le t\le T}$ with $M_{0}=0$, is given by:
\begin{equation}\label{eq:3:2}
\small{M_{t}=\int_{0}^{t} \frac{\partial v^{a}\left(s^{-},X_{s^{-}}\right)}{\partial X}\sigma d W_{s} + \int_{0}^{t}\int_{\mathbb{R}^{d}}\left[v^{a}\left(s^{-}, X_{s^{-}}+y\right)-v^{a}\left(s^{-}, X_{s^{-}}\right)\right]\tilde{\mathcal{P}}_{X}\left( {ds,dy} \right), \quad 0 \le t \le T,}
\end{equation}
where $\tilde{\mathcal{P}}_{X}\left( {ds,dy} \right)$ is the compensated Poisson random measure induced by $\mathcal{P}_{X}$.
\end{thm}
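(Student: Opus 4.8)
The plan is to obtain the claimed representation directly from the generalized It\^o (It\^o--Doeblin) formula for jump-diffusions applied to $v^{a}(t,X_{t})$, and then to read off the martingale component by appealing to the uniqueness of the decomposition of the special semimartingale $(V_{t}-V_{0})$ into a local martingale plus a predictable finite-variation part. The guiding observation is that It\^o's formula automatically sorts the increment $dv^{a}(t,X_{t})$ into two groups: terms that are absolutely continuous in $t$ (the drift, the second-order diffusion term, and the compensator of the jumps), which are predictable and of finite variation, and the genuine stochastic integrals against $dW$ and against the compensated jump measure, which are (local) martingales. Only the latter can survive in the martingale part $M_{t}$ of the Doob--Meyer decomposition \eqref{eq:2:7}.

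Concretely, first I would apply the jump-diffusion It\^o formula to $v^{a}(t,X_{t})$ using the dynamics \eqref{eq:2:1}. This produces a drift integral $\int_{0}^{t}(\partial_{s}v^{a}+\tfrac{\partial v^{a}}{\partial X}b+\tfrac12\mathrm{tr}(\sigma\sigma^{\top}\tfrac{\partial^{2}v^{a}}{\partial X^{2}}))\,ds$, a continuous stochastic integral $\int_{0}^{t}\tfrac{\partial v^{a}}{\partial X}\sigma\,dW_{s}$, and a pure-jump sum $\sum_{0<s\le t}[v^{a}(s,X_{s})-v^{a}(s,X_{s^{-}})]$. Writing the jump sum as an integral against the jump measure $\mathcal{P}_{X}$ of $X$ --- whose marks are, by construction, the realized jump sizes $y$ of $X$, i.e. the image of $\mathcal{P}$ under $(s,\tilde y)\mapsto(s,J(s,X_{s^{-}},\tilde y))$ --- the jump contribution becomes $\int_{0}^{t}\int_{\mathbb{R}^{d}}[v^{a}(s^{-},X_{s^{-}}+y)-v^{a}(s^{-},X_{s^{-}})]\mathcal{P}_{X}(ds,dy)$. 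I would then compensate this raw integral, splitting it as the compensated integral against $\tilde{\mathcal{P}}_{X}$ plus the predictable intensity integral against the compensator $\mu_{X}$.

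At this stage all the predictable finite-variation pieces (the drift integral and the $\mu_{X}$-compensator) are collected into the residual process $U_{t}$ of \eqref{eq:2:7}, and the remaining two terms are exactly the right-hand side of \eqref{eq:3:2}. To finish, I would verify that these two stochastic integrals are true martingales rather than merely local ones: the boundedness of $\partial v^{a}/\partial X$ makes the integrand $\tfrac{\partial v^{a}}{\partial X}\sigma$ square-integrable, and by the mean-value theorem the same bound controls the jump increment $v^{a}(s^{-},X_{s^{-}}+y)-v^{a}(s^{-},X_{s^{-}})$, so Theorem \ref{thm:jumpprocess} yields a genuine martingale for the compensated jump integral. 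Since $(V_{t}-V_{0})$ is a special semimartingale, its decomposition into a local martingale and a predictable finite-variation process is unique, and this forces $M_{t}$ to coincide with the claimed expression.

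The part I expect to be most delicate is the justification for applying It\^o's formula in the first place. The hypothesis supplies only $C^{2}$-regularity in $X$ with bounded first and second derivatives, whereas the classical jump-diffusion It\^o formula is usually stated for $C^{1,2}$ functions, and American value functions are notoriously non-smooth in $t$ across the exercise boundary. I would argue that this is not fatal for the conclusion: the time-derivative term is of finite variation and is swallowed by $U_{t}$, so it never enters $M_{t}$, and hence only spatial regularity matters for the martingale part. A careful write-up would either invoke a version of the formula valid under the stated spatial smoothness (with the time integral interpreted appropriately) or approximate $v^{a}$ by smooth functions and pass to the limit, using the bounded-derivative hypothesis to secure the requisite uniform integrability. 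A secondary point requiring care is the bookkeeping of the mark change from $\mathcal{P}$ to $\mathcal{P}_{X}$, ensuring that $\tilde{\mathcal{P}}_{X}$ really is the compensated measure induced by the jumps of $X$, as asserted in the statement.
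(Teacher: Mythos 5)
Your argument is essentially the paper's own: the paper proves this theorem by simply citing Proposition 8.16 of \cite{cont2003financial}, and the content of that proposition is exactly the It\^{o}-formula-plus-uniqueness-of-the-special-semimartingale-decomposition argument you spell out, including the use of the bounded-derivative hypothesis to upgrade the stochastic integrals from local to true martingales. Your closing caveat about the time-regularity of the American value function flags a genuine point that the paper's one-line citation also glosses over, but it does not change the fact that the two approaches coincide.
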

\begin{proof}
Apply Proposition 8.16 in \cite{cont2003financial} to jump process $X_{t}$ and we can immediately obtain the result above.\qedhere
\end{proof}
\begin{rem}\label{rem:3:1}
In practice, the asset price process $X_{t}$, which is usually an exponential of a compound Poisson process (see the numerical example \eqref{eq:5:1}), induces a very complicated jump measure $P_{X}$ that can hardly be simulated numerically. To apply Theorem \ref{thm:derivative}, we can introduce an intermediate bridge function $S(\cdot)$ such that $S(t)=S(X(t))$ induces a relatively easy-to-simulate Poisson random measure $P_{S}$ without significantly increasing the complexity of function $v^{a}(\cdot,\cdot)$. In the numerical example we consider later, we will choose a specific function $S(\cdot)$ to achieve this goal.
\end{rem}
Theorem \ref{thm:derivative} implies that $\phi_{t}$ is close to the derivative of the Bermudan option price, while $\psi(t,y)$ is close to the Bermudan option price increment caused by the jump. If we want to approximate these integrands, we should start with the derivative and the increment of certain option price that is close to Bermudan option price, such as the European option price. Specifically, in constructing the T-M algorithm, we will try to use least-squares regression method to approximate the integrands. Therefore we should incorporate the derivative and the increment of the European option price into the function bases for $\phi_{t}$ and $\psi(t,y)$ respectively. Moreover, Remark \ref{rem:3:1} indicates that the choice of the Poisson random measure $\mathcal{P}_{S}$ (or function $S(\cdot)$) is essential to simplify the representation of the martingale. In fact, the choice of $S(\cdot)$ should balance the tradeoff between the complexity of $\mathcal{P}_{S}$ and the complexity of function $v(\cdot,\cdot)$. \\
%To simulate the underlying asset price process \eqref{eq:2:1}, we will need to simulate the compensated Poisson random measure $\tilde{\mathcal{P}}$ relying on the following theorem that explicitly formulates the intensity function  $\mu \left( {dt \times dy} \right)$ for $\tilde{\mathcal{P}}$ induced by a compound Poisson process.
%
%\begin{thm}[\emph{Prop 3.5 in Cont and Tankov}]\label{thm:2:4}
%Let $S(t)_{t>0}$ be a compound Poisson process with intensity $\lambda$ and jump size distribution $f$. Then the Poisson random
%measure $\mathcal{P}_{S}$ induced by $S(t)_{t>0}$ on $\left[ {0,\infty} \right] \times {\mathbb{R}^d}$ has intensity measure $\mu \left( {dt \times dy} \right) = \lambda f\left( {dy} \right)dt$.
%\end{thm}
%
%According to Theorem \ref{thm:2:3}, for a compound Poisson process $S(t)$, the compensated Poisson random measure $\mathcal{\tilde{P}}_{S}$ induced by $S(t)$ can be simulated by $\mathcal{\tilde{P}}_{S}=\mathcal{P}_{S}-\lambda f\left( {dy} \right)dt$. Although the asset price $X(t)$ satisfying model \eqref{eq:2:1} is not a compound Poisson process, the log price process $S(t)=\ln\left(X\left(t\right)\right)$ is usually a compound Poisson process. Therefore, if we represent the price process $V_{T_{j}}^ {\ast}$ as a function of $S(t)$, then it is jointly driven by $W$ and $\mathcal{P}_{S}$. Thus the martingale component $M_{T_{j}}^{\ast}$ is jointly driven by $W$ and $\mathcal{P}_{S}$. Therefore the simulation of \eqref{eq:2:1} and Theorem \eqref{thm:2:2} is tractable because $W$ and $\mathcal{P}_{S}$ can be simulated.
\\
Inspired by Theorem \ref{thm:martingale} and Theorem \ref{thm:derivative}, and following \cite{Belomestny:2009}'s work, if one tries to approximate the martingale $M_{T_{j}}$, a natural idea is to first estimate the integrands $\phi_{t}$ and $\psi\left(t,y\right)$ in the following expression:
\begin{equation}\label{eq:3:3}
M_{T_{j}} = \int_0^{{T_j}} {{\phi _t}dW\left( t \right)}  + \int_0^{{T_j}} {\int_{{\mathbb{R}^d}} {\psi \left( {t,y} \right)} } \tilde{\mathcal{ P}}\left( {dt,dy} \right),\;\;j = 0,...,\mathcal{J},
\end{equation}
at a finite number of time and space points. Then an approximation of $M_{T_{j}}$ will be represented via $\phi_{t}$ and $\psi\left(t,y\right)$ using the Ito sum (similar to the Riemann sum).\\
\\
We introduce a partition $\pi=\{t_{l}: l=0,1,...,\mathcal{L} \}$ on $\left[0,T\right]$ such that $t_{0}=0,\; t_{\mathcal{L}}=T$ and $\pi \supset \Xi $, and a partition $\mathcal{A} = \{A_{k}: k=0,1,...,\mathcal{K} \}$ on $\mathbb{R}^{d}$ such that $\left\{ \left[t_{l},t_{l+1}\right]\times{{A_k}}\right\}$ are $\mu$-measurable disjoint subsets and $\bigcup\limits_{k = 1}^\mathcal{K} {{A_k}}  = \mathbb{R}^{d}$. Therefore, $P\left( {\left[ {{t_l},{t_{l + 1}}} \right] \times {A_k}} \right)=\int_{t_{l}}^{t_{l+1}}\int_{A_{k}}\mathcal{P}\left(ds,dy\right)$ is a Poisson random variable (regarded as Poisson increment), and $\tilde{P}\left( {\left[ {{t_l},{t_{l + 1}}} \right] \times {A_k}} \right)=\int_{t_{l}}^{t_{l+1}}\int_{A_{k}}\mathcal{\tilde{P}}\left(ds,dy\right)$ is the corresponding compensated Poisson random variable with intensity $\mu \left( {\left[ {t_{l},t_{l+1}} \right] \times A_{k}} \right)$ (regarded as compensated Poisson increment). We denote the magnitude of partitions $\pi$ and $\mathcal{A}$ as $|\pi|$ and $|\mathcal{A}|$ respectively, i.e., $|\pi|=\max\limits_{0<l\leq \mathcal{L}}\left(t_{l}-t_{l-1}\right)$ and $|\mathcal{A}|=\max\limits_{1\leq k\leq \mathcal{K} }\int_{{A_k}} {f\left( {y} \right)dy}$.\\
\\
From \eqref{eq:2:7}, we have
\begin{equation}\label{eq:3:4}
V_{T_{j+1}}-V_{T_{j}}=\left(M_{T_{j+1}}-M_{T_{j}}\right)+\left(U_{T_{j+1}}-U_{T_{j}}\right),\;\;j = 0,...,\mathcal{J}.
\end{equation}
Combining with the Ito sum of $M_{T_{j+1}}$ in \eqref{eq:3:3}, we have
\begin{equation}\label{eq:3:5}
\begin{array}{l}
{V_{{T_{j + 1}}}} - {V_{{T_j}}} \approx \sum\limits_{{T_j} \le {t_l} < {T_{j + 1}}} {{\phi _{{t_l}}}\left( {W\left( {{t_{l + 1}}} \right) - W\left( {{t_l}} \right)} \right)} \\
\;\;\;\;\;\;\;\;\;\;\;\;\;\;\;\;\;\;\;\;+ \sum\limits_{{T_j} \le {t_l} < {T_{j + 1}}} {\sum\limits_{k = 1}^\mathcal{K} {\psi \left( {{t_l},y_{k}} \right)\tilde{P}\left( {\left[ {{t_l},{t_{l + 1}}} \right] \times {A_k}} \right)} }  + {U_{{T_{j + 1}}}} - {U_{{T_j}}},
\end{array}
\end{equation}
where $y_{k}\in A_{k}$ is a representative value, and we will keep using this notation thereafter. Multiplying both sides of \eqref{eq:3:5} by the Wiener process increment ${\left( {W\left( {{t_{l + 1}}} \right) - W\left( {{t_l}} \right)} \right)}$ and taking conditional expectations  with respect to the filtration $\mathcal{F}_{t_{l}}$, we obtain
\begin{equation}\label{eq:3:6}
{\phi _{{t_l}}} \approx \frac{1}{{{t_{l + 1}} - {t_l}}}{E_{{t_l}}}\left[ {\left( {W\left( {{t_{l + 1}}} \right) - W\left( {{t_l}} \right)} \right){V_{{T_{j + 1}}}}} \right],\;\;\;{T_j} \le {t_l} < {T_{j + 1}},
\end{equation}
where we use the $\mathcal{F}$-predictability of $U$, the independent increment property of $W(t)$ and the independence between $W$ and $\mathcal{P}$.\\
\\
Similarly, if we multiply both sides of \eqref{eq:3:5} by the compensated Poisson random variable ${\tilde{P}\left( {\left[ {{t_l},{t_{l + 1}}} \right] \times {A_k}} \right)}$ and take the conditional expectations with respect to the filtration $\mathcal{F}_{t_{l}}$, we can obtain
\begin{equation}\label{eq:3:7}
\small{\psi \left( {{t_l},y_{k}} \right) \approx \frac{1}{\mu \left( {\left[ {t_{l},t_{l+1}} \right] \times A_{k}} \right)}{E_{{t_l}}}\left[ {\tilde{P}\left( {\left[ {{t_l},{t_{l + 1}}} \right] \times {A_k}} \right){V_{{T_{j + 1}}}}} \right],{T_j} \le {t_l} < {T_{j + 1}} ,1 \le k \le \mathcal{K},}
\end{equation}
where we again use the $\mathcal{F}$-predictability of $U$, the independent increment property of $\mathcal{P}$ and the independence between $\mathcal{P}$ and $W$.\\
\\
Motivated by expressions \eqref{eq:3:6} and \eqref{eq:3:7}, we denote the approximation of ${\phi _{{t_l}}}$ and $\psi \left( {{t_l},y_{k}} \right)$ by $\phi _{{t_l}}^{\pi ,\mathcal{A}}$ and ${\psi ^{\pi ,\mathcal{A}}}\left( {{t_l},y_{k}} \right)$ respectively, which are defined as follows:
\begin{equation}\label{eq:3:8}
\phi _{{t_l}}^{\pi ,\mathcal{A}}\;: = \frac{1}{{\Delta _l^\pi }}{E_{{t_l}}}\left[ {\left( {{\Delta ^\pi }{W_l}} \right){V_{{T_{j + 1}}}}} \right],\;\;\;{T_j} \le {t_l} < {T_{j + 1}},
\end{equation}
and
\begin{equation}\label{eq:3:9}
\small{{\psi ^{\pi ,\mathcal{A}}}\left( {{t_l},y_{k}} \right): = \frac{1}{\mu \left( {\left[ {t_{l},t_{l+1}} \right] \times A_{k}} \right)}{E_{{t_l}}}\left[ {\tilde{P}\left( {\left[ {{t_l},{t_{l + 1}}} \right] \times {A_k}} \right){V_{{T_{j + 1}}}}} \right],{T_j} \le {t_l} < {T_{j + 1}}, 1 \le k \le \mathcal{K},}
\end{equation}
where $\Delta_{l}^{\pi}$ and $\Delta^{\pi}W_{l}$ represent the increments of time $t$ and the Winer process $W(t)$ respectively, i.e. $\Delta_{l}^{\pi}=\left(t_{l+1}-t_{l}\right)$ and $\Delta^{\pi}W_{l}=\left(W_{l+1}-W_{l}\right)$. Therefore, we can construct an approximation of $M_{T_{j}}$, denoted by $M_{{T_j}}^{\pi ,\mathcal{A}}$, as follows:
\begin{equation}\label{eq:3:10}
M_{{T_j}}^{\pi ,\mathcal{A}}\;: = \;\sum\limits_{0 \le {t_l} < {T_{j }}} {\phi _{{t_l}}^{\pi ,\mathcal{A}}\left( {{\Delta ^\pi }{W_l}} \right)}  + \sum\limits_{0 \le {t_l} < {T_{j}}} {\sum\limits_{k = 1}^\mathcal{K} {{\psi ^{\pi ,\mathcal{A}}}\left( {{t_l},y_{k}} \right)\tilde{P}\left( {\left[ {{t_l},{t_{l + 1}}} \right] \times {A_k}} \right)} } .
\end{equation}
The construction procedure of $M_{{T_j}}^{\pi ,\mathcal{A}}$ can be summarized in the following Algorithm 1.
\begin{algorithm}
\caption{\textbf{\emph{\quad \quad Construction of the Martingale Approximation $M^{\pi ,\mathcal{A}}$}}}
\label{alg:1}
Step 1: Express $M_{T_{j}}$ as an integral of $\phi(t)$ and $\psi(t,y)$ via \eqref{eq:3:3}.\\
Step 2: Approximate ${\phi _{{t_l}}}$ by $\phi _{{t_l}}^{\pi ,\mathcal{A}}$ via \eqref{eq:3:8} and $\psi \left( {{t_l},y_{k}} \right)$ by ${\psi ^{\pi ,\mathcal{A}}}\left( {{t_l},y_{k}} \right)$ via \eqref{eq:3:9} respectively.\\
Step 3: Construct the approximation of $M_{T_{j}}$, denoted by $M_{{T_j}}^{\pi ,\mathcal{A}}$, via \eqref{eq:3:10}.
\end{algorithm}
\\
Under the pure-diffusion models, \cite{Belomestny:2009} construct the approximation of $M_{{T_j}}$, denoted by $M_{{T_j}}^{\pi}$ to preserve the martingale property. Here we generalize their techniques to the approximation of the jump part of the martingale under the jump-diffusion models. We observe that, similar to regarding $\phi_{t}$ as a random function of time, we can regard $\psi\left(t,y\right)$ in \eqref{eq:3:5} as a random function of both time and space variables. By properly constructing the Poisson random measure and partitioning the supporting space $\mathbb{R}^{d}$ with respect to the Poisson random measure, we are able to construct the Ito sum to approximate the jump part of $M_{{T_j}}$.\\
\\
Notice that $M^{\pi ,\mathcal{A}}=\{M_{{T_j}}^{\pi ,\mathcal{A}}:j=0,...,\mathcal{J}\}$ remains to be a martingale adapted to the filtration $\{\mathcal{F}_{T_{j}}:j=0,...,\mathcal{J}\}$, based on its structure. We formally state this result in the following theorem.
\begin{thm}[\textbf{\emph{True Martingale}}]\label{thm:3:3} If an approximation of $M$, denoted by $M^{\pi ,\mathcal{A}}$, is constructed using Algorithm \ref{alg:1}, then $M^{\pi ,\mathcal{A}}$ is still a martingale adapted to the filtration $\{\mathcal{F}_{T_{j}}:j=0,...,\mathcal{J}\}$.
\end{thm}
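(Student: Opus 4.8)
The plan is to verify the defining properties of a discrete-time martingale for $M^{\pi,\mathcal{A}}$ directly from its explicit form \eqref{eq:3:10}, exploiting that each summand is a product of an $\mathcal{F}_{t_l}$-measurable coefficient and an increment with vanishing conditional mean. First I would establish adaptedness: in the representation \eqref{eq:3:10} of $M_{T_j}^{\pi,\mathcal{A}}$ every index satisfies $t_l < T_j$, the coefficients $\phi_{t_l}^{\pi,\mathcal{A}}$ and $\psi^{\pi,\mathcal{A}}(t_l,y_k)$ are $\mathcal{F}_{t_l}$-measurable (being conditional expectations given $\mathcal{F}_{t_l}$, as in \eqref{eq:3:8} and \eqref{eq:3:9}), and the increments $\Delta^\pi W_l$ and $\tilde{P}([t_l,t_{l+1}]\times A_k)$ are $\mathcal{F}_{t_{l+1}}$-measurable with $t_{l+1}\le T_j$; hence $M_{T_j}^{\pi,\mathcal{A}}$ is $\mathcal{F}_{T_j}$-measurable. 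Integrability of each $M_{T_j}^{\pi,\mathcal{A}}$ follows from the square-integrability of the coefficients together with the square-integrability of the Wiener and compensated Poisson increments under the standing assumptions.

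The core step is to show $E_{T_j}[M_{T_{j+1}}^{\pi,\mathcal{A}}] = M_{T_j}^{\pi,\mathcal{A}}$. Since $M_{T_j}^{\pi,\mathcal{A}}$ is already $\mathcal{F}_{T_j}$-measurable, it is equivalent to prove that the increment
\[
M_{T_{j+1}}^{\pi,\mathcal{A}} - M_{T_j}^{\pi,\mathcal{A}} = \sum_{T_j \le t_l < T_{j+1}} \phi_{t_l}^{\pi,\mathcal{A}} (\Delta^\pi W_l) + \sum_{T_j \le t_l < T_{j+1}} \sum_{k=1}^{\mathcal{K}} \psi^{\pi,\mathcal{A}}(t_l,y_k) \tilde{P}([t_l,t_{l+1}]\times A_k)
\]
has zero conditional expectation given $\mathcal{F}_{T_j}$. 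I would treat a single generic summand and sum at the end.

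For a Wiener term with $T_j \le t_l < T_{j+1}$, I would condition first on the finer filtration $\mathcal{F}_{t_l} \supseteq \mathcal{F}_{T_j}$ via the tower property, pull the $\mathcal{F}_{t_l}$-measurable factor $\phi_{t_l}^{\pi,\mathcal{A}}$ out of the inner conditional expectation, and use the zero-mean independent-increment property $E_{t_l}[\Delta^\pi W_l] = 0$ to conclude that the term vanishes. The argument for a Poisson term is identical, now using $E_{t_l}[\tilde{P}([t_l,t_{l+1}]\times A_k)] = 0$, which is exactly the defining zero-mean property of the compensated Poisson increment noted after Definition \ref{def:2.2}. Summing the finitely many vanishing terms yields $E_{T_j}[M_{T_{j+1}}^{\pi,\mathcal{A}} - M_{T_j}^{\pi,\mathcal{A}}] = 0$, which is the martingale identity.

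The content of the argument is entirely structural, so there is no single hard analytic obstacle; the point to be careful about is the correct ordering of the conditioning. One must condition on $\mathcal{F}_{t_l}$ rather than directly on $\mathcal{F}_{T_j}$, because only with respect to $\mathcal{F}_{t_l}$ are the coefficients measurable and the increments centered: the increment $\Delta^\pi W_l$ is not $\mathcal{F}_{T_j}$-measurable, and its effect is controlled only through the tower property. The independence between $W$ and $\mathcal{P}$ plays no essential role here beyond guaranteeing the individual zero-mean properties; what matters is that each increment is centered given the filtration at its own left endpoint $t_l$.
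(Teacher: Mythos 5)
Your proposal is correct and follows essentially the same route as the paper: both arguments rest on the observation that each summand in \eqref{eq:3:10} is a product of an $\mathcal{F}_{t_l}$-measurable coefficient and a centered increment, so the conditional expectation of the increment of $M^{\pi,\mathcal{A}}$ vanishes. Your version is slightly more careful than the paper's (explicit adaptedness and integrability checks, and conditioning through $\mathcal{F}_{t_l}$ via the tower property rather than appealing directly to independence), but the underlying idea is identical.
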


\begin{proof} To show $M^{\pi ,\mathcal{A}}$ is a martingale adapted to the filtration $\{\mathcal{F}_{T_{j}}:j=0,...,\mathcal{J}\}$, it is sufficient to show that for $0\le j_{1} < j_{2} \le \mathcal{J}$,  $E\left[M_{T_{j_{2}}}|\mathcal{F}_{T_{j_{1}}}\right]=M_{T_{j_{1}}}$.\\
\\
For $0\le l\le \mathcal{L}$ and $1\leq k\leq \mathcal{K}$, $\phi _{{t_l}}^{\pi ,\mathcal{A}}$ and ${\psi ^{\pi ,\mathcal{A}}}\left( {{t_l},y_{k}} \right)$ are function of $t_{l}$ and $X_{t_{l}}$. Hence, they are independent from both $\Delta^{\pi}W_{l}$ and $\tilde{P}\left( {\left[ {{t_l},{t_{l + 1}}} \right] \times {A_k}} \right)$. According to the zero-mean property of $\Delta^{\pi}W_{l}$ and $\tilde{P}\left( {\left[ {{t_l},{t_{l + 1}}} \right] \times {A_k}} \right)$, we have
\begin{equation*}
\begin{split}
E\left[M_{T_{j_{2}}}|\mathcal{F}_{T_{j_{1}}}\right]
&=E\left[\sum\limits_{0 \le {t_l} < {T_{j_{2} }}} {\phi _{{t_l}}^{\pi ,\mathcal{A}}\left( {{\Delta ^\pi }{W_l}} \right)}  + \sum\limits_{0 \le {t_l} < {T_{j_{2}}}} {\sum\limits_{k = 1}^\mathcal{K} {{\psi ^{\pi ,\mathcal{A}}}\left( {{t_l},y_{k}} \right)\tilde{P}\left( {\left[ {{t_l},{t_{l + 1}}} \right] \times {A_k}} \right)} }\Big{|}\mathcal{F}_{T_{j_{1}}}\right]\\
&=\sum\limits_{0 \le {t_l} < {T_{j_{1} }}} {\phi _{{t_l}}^{\pi ,\mathcal{A}}\left( {{\Delta ^\pi }{W_l}} \right)}  + \sum\limits_{0 \le {t_l} < {T_{j_{1}}}} {\sum\limits_{k = 1}^\mathcal{K} {{\psi ^{\pi ,\mathcal{A}}}\left( {{t_l},y_{k}} \right)\tilde{P}\left( {\left[ {{t_l},{t_{l + 1}}} \right] \times {A_k}} \right)} }\\
&=M_{T_{j_{1}}}.
\end{split}
\end{equation*}\qedhere
\end{proof}
According to Theorem \ref{thm:3:3}, if we plug $M^{\pi ,\mathcal{A}}$ in \eqref{eq:2:6}, it is easy to see that ${V_{0}^{up}}\left(M^{\pi ,\mathcal{A}}\right)$ is a true upper bound on the Bermudan option price $V_{0}^{\ast}$ in the sense that ${V_{0}^{up}}\left(M^{\pi ,\mathcal{A}}\right)$ is an unbiased expectation for a valid upper bound. Moreover, if we adopt the L-S algorithm to solve the primal problem \eqref{eq:2:3}, we will obtain a suboptimal exercise strategy $\tilde{\tau}$. Exercising $\tilde{\tau}$ along a certain number of sample paths yields an approximation $V_{T_{j}}$ of the Bermudan option price at time $T_{j}$ via $V_{T_{j}}=E_{T_{j}}\left[H_{\tilde{\tau}_{j}}\right]$, where $\tilde{\tau}_{j}$ means the stopping time $\tilde{\tau}$ takes value greater than or equal to $j$. Due to the tower property of conditional expectations, we can rewrite \eqref{eq:3:8} and \eqref{eq:3:9} as
\begin{equation}\label{eq:3:11}
\phi _{{t_l}}^{\pi ,\mathcal{A}}\;: = \frac{1}{{\Delta _l^\pi }}{E_{{t_l}}}\left[ {\left( {{\Delta ^\pi }{W_l}} \right)}H_{\tilde{\tau}_{j+1}} \right],\;\;\;{T_j} \le {t_l} < {T_{j + 1}},
\end{equation}
and
\begin{equation}\label{eq:3:12}
\small{{\psi ^{\pi ,\mathcal{A}}}\left( {{t_l},y_{k}} \right): = \frac{1}{\mu \left( {\left[ {t_{l},t_{l+1}} \right] \times A_{k}} \right)}{E_{{t_l}}}\left[ {\tilde{P}\left( {\left[ {{t_l},{t_{l + 1}}} \right] \times {A_k}} \right)H_{\tilde{\tau}_{j+1}}} \right],{T_j} \le {t_l} < {T_{j + 1}}, 1 \le k \le \mathcal{K}.}
\end{equation}
Through this we avoid the computation of conditional expectations in \eqref{eq:3:8} and \eqref{eq:3:9}, which would incur nested simulation in implementation. Therefore, we can estimate $M_{{T_j}}^{\pi ,\mathcal{A}}$ in \eqref{eq:3:10} via non-nested simulation, and hence significantly improve the computational efficiency.\\
\\%\section{Error Analysis and Convergence Analysis}
A natural question that arises after we obtain $M^{\pi ,\mathcal{A}}$ is how good it approximates the objective martingale $M$. In the remainder of this section, we will focus on the limiting behaviour of $M^{\pi ,\mathcal{A}}$ as $|\pi|$ goes to zero, and bound the distance between $M^{\pi ,\mathcal{A}}$ and $M$ with $|\pi|$. Precisely, we have the following theorem.
\begin{thm}\label{thm:3:4}
Let ${M_{{T_j}}}$ be the martingale component of ${V_{{T_j}}} = v\left( {T_j},X_{{T_j}} \right)$ and $M_{{T_j}}^{\pi ,\mathcal{A}}$ be its approximation obtained via Algorithm \ref{alg:1}, where $v(T_{j},\cdot)$ are Lipschitz continuous functions. Then there exists a constant $C>0$ such that
\[ E\left[\max_{0\leq j \leq \mathcal{J}}|M_{{T_j}}^{\pi ,\mathcal{A}}-{M_{{T_j}}}|^{2}\right]\leq C|\pi|.\]
\end{thm}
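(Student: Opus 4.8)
The plan is to control the error $M^{\pi,\mathcal{A}}_{T_j} - M_{T_j}$ by decomposing it into a diffusion part and a jump part, bounding each piece separately, and then using Doob's maximal inequality to pass from the bound at fixed $T_j$ to the bound on the running maximum. Recall from Theorem \ref{thm:martingale} and \eqref{eq:3:3} that
\begin{equation*}
M_{T_j} = \int_0^{T_j}\phi_t\,dW(t) + \int_0^{T_j}\int_{\mathbb{R}^d}\psi(t,y)\,\tilde{\mathcal{P}}(dt,dy),
\end{equation*}
while $M^{\pi,\mathcal{A}}_{T_j}$ in \eqref{eq:3:10} is the Ito sum built from the piecewise-constant integrands $\phi^{\pi,\mathcal{A}}_{t_l}$ and $\psi^{\pi,\mathcal{A}}(t_l,y_k)$ of \eqref{eq:3:8}–\eqref{eq:3:9}. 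First I would write the difference as
\begin{equation*}
M^{\pi,\mathcal{A}}_{T_j}-M_{T_j} = \sum_{0\le t_l<T_j}\int_{t_l}^{t_{l+1}}\!\bigl(\phi^{\pi,\mathcal{A}}_{t_l}-\phi_t\bigr)dW(t) + \sum_{0\le t_l<T_j}\int_{t_l}^{t_{l+1}}\!\int_{\mathbb{R}^d}\!\bigl(\psi^{\pi,\mathcal{A}}(t_l,y_k)-\psi(t,y)\bigr)\tilde{\mathcal{P}}(dt,dy),
\end{equation*}
where on each block $A_k$ the approximating integrand takes the representative value $\psi^{\pi,\mathcal{A}}(t_l,y_k)$. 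Both summands are themselves martingales in $j$, so the whole difference is a martingale with $M_0^{\pi,\mathcal{A}}-M_0=0$; this is exactly what lets me invoke Doob.

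The second step is the $L^2$ estimate at the terminal time $T=T_{\mathcal{J}}$. By the Ito isometry for Wiener integrals and its analogue for compensated Poisson integrals (the zero-mean, independent-increment properties stressed after Definition \ref{def:2.2} and in Theorem \ref{thm:jumpprocess}), together with the orthogonality of the $W$ and $\tilde{\mathcal{P}}$ parts,
\begin{equation*}
E\bigl[|M^{\pi,\mathcal{A}}_{T}-M_{T}|^2\bigr] = \sum_l E\!\int_{t_l}^{t_{l+1}}\!|\phi^{\pi,\mathcal{A}}_{t_l}-\phi_t|^2\,dt + \sum_l\sum_k E\!\int_{t_l}^{t_{l+1}}\!\int_{A_k}\!|\psi^{\pi,\mathcal{A}}(t_l,y_k)-\psi(t,y)|^2\,\mu(dt\times dy).
\end{equation*}
Each term measures how far a genuine integrand deviates from its block-constant projection. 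Here I would use Theorem \ref{thm:derivative}, which identifies $\phi_t = \partial_X v^a(t^-,X_{t^-})\sigma$ and $\psi(t,y)=v^a(t^-,X_{t^-}+y)-v^a(t^-,X_{t^-})$: since $v$ (hence $v^a$) is Lipschitz with bounded derivatives, $\phi$ and $\psi$ inherit Lipschitz/Hölder regularity in $t$ and $X$, and the process $X_t$ itself has the standard moment and $L^2$-increment bounds $E|X_t-X_s|^2\le C|t-s|$ for a jump-diffusion with Lipschitz coefficients. Combining the regularity of the integrands with these increment bounds shows each block contributes an error of order $|\pi|\cdot(t_{l+1}-t_l)$, and summing over the partition yields $E[|M^{\pi,\mathcal{A}}_{T}-M_{T}|^2]\le C|\pi|$.

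The final step lifts this from the fixed terminal time to the maximum over the exercise dates: since $\{M^{\pi,\mathcal{A}}_{T_j}-M_{T_j}\}_j$ is a martingale, Doob's $L^2$ maximal inequality gives
\begin{equation*}
E\Bigl[\max_{0\le j\le\mathcal{J}}|M^{\pi,\mathcal{A}}_{T_j}-M_{T_j}|^2\Bigr]\le 4\,E\bigl[|M^{\pi,\mathcal{A}}_{T}-M_{T}|^2\bigr]\le C|\pi|,
\end{equation*}
which is the claim (after absorbing the factor $4$ into $C$). I expect the main obstacle to lie in the second step, specifically in justifying the block-wise $L^2$ bound on the jump integrand: unlike the diffusion case, one must control $\psi^{\pi,\mathcal{A}}(t_l,y_k)-\psi(t,y)$ simultaneously in the time variable $t$ (handled by the regularity of $v^a$ and the $L^2$-increments of $X$) and in the space variable $y$ across each cell $A_k$ (handled by Lipschitz continuity of $y\mapsto v^a(\cdot,\cdot+y)$ together with the refinement measured by $|\mathcal{A}|$), and to verify that the accumulated jump-part error is also $O(|\pi|)$ under the stated coupling of the refinements. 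Care is also needed that the conditional-expectation definitions \eqref{eq:3:8}–\eqref{eq:3:9} realize $\phi^{\pi,\mathcal{A}}_{t_l}$ and $\psi^{\pi,\mathcal{A}}(t_l,y_k)$ as the $L^2$-orthogonal projections of $\phi_t$ and $\psi(t,y)$ onto the block-constant functions, so that the cross terms in the isometry vanish and the clean Pythagorean decomposition above is legitimate.
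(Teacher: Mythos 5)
Your overall architecture coincides with the paper's: write $M^{\pi,\mathcal{A}}_{T_j}-M_{T_j}$ as a sum of stochastic integrals of the integrand differences, kill the cross terms by Ito isometry and the independence of $W$ and $\tilde{\mathcal{P}}$, reduce everything to the block-wise $L^2$ errors of $\phi$ and $\psi$, and finish with Doob's $L^2$ maximal inequality. You also correctly identify the one point that must be verified before the Pythagorean decomposition is legitimate, namely that \eqref{eq:3:8}--\eqref{eq:3:9} realize $\phi^{\pi,\mathcal{A}}_{t_l}$ and $\psi^{\pi,\mathcal{A}}(t_l,y_k)$ as conditional block averages of the true integrands; the paper spends the first half of its proof establishing exactly these identities, $\phi^{\pi,\mathcal{A}}_{t_l}=\frac{1}{\Delta^\pi_l}E_{t_l}\bigl[\int_{t_l}^{t_{l+1}}\phi_s\,ds\bigr]$ and its Poisson analogue, via \eqref{eq:2:8}, \eqref{eq:3:3} and Ito's isometry.

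The genuine gap is in your middle step. You propose to bound $E\int_{t_l}^{t_{l+1}}|\phi^{\pi,\mathcal{A}}_{t_l}-\phi_t|^2\,dt$ by invoking Theorem \ref{thm:derivative} to write $\phi_t=\partial_X v^a(t^-,X_{t^-})\sigma$ and $\psi(t,y)=v^a(t^-,X_{t^-}+y)-v^a(t^-,X_{t^-})$ and then appealing to Lipschitz regularity of the integrands plus $E|X_t-X_s|^2\le C|t-s|$. But Theorem \ref{thm:3:4} assumes only that $v(T_j,\cdot)$ is Lipschitz, whereas Theorem \ref{thm:derivative} requires $v^a$ to be $C^2$ in $X$ with bounded derivatives; for a Bermudan value function (a pointwise maximum involving the payoff) this smoothness is exactly what one cannot assume, and without it the explicit formula for $\phi$ is unavailable and the claim that ``$\phi$ inherits Lipschitz/H\"older regularity in $t$'' is unsupported. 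The $L^2$-regularity in time of the martingale integrand under merely Lipschitz terminal data is a substantive theorem, not a consequence of elementary increment estimates. The paper fills this hole by recasting \eqref{eq:2:1} together with \eqref{eq:3:13} as a decoupled forward--backward SDE with jumps on each interval $[T_j,T_{j+1}]$ and citing Theorem 2.1 of \cite{bouchard2008discrete}, which delivers precisely the bounds \eqref{eq:3:14}--\eqref{eq:3:15} of order $C_j|\pi|$ under Lipschitz assumptions, covering both the time discretization of $\phi$ and the time-and-space discretization of $\psi$. Your argument would go through under the stronger $C^2$ hypotheses of Theorem \ref{thm:derivative} (and with the spatial error you rightly flag absorbed via the refinement of $\mathcal{A}$), but as written it does not prove the theorem under its stated hypotheses; to close it you either need to import the $L^2$-regularity result for BSDEs with jumps or strengthen the assumptions on $v$.
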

\begin{proof} Fix $T_{j}$. Consider $t_{l}$ such that $T_{j} \leq t_{l} < T_{j+1}$ and $k$ such that $1 \le k \le \mathcal{K}$. According to \eqref{eq:3:8} and \eqref{eq:3:9}, we have:
\begin{equation*}
\begin{split}\phi _{{t_l}}^{\pi ,\mathcal{A}}
&= \frac{1}{{\Delta _l^\pi }}{E_{{t_l}}}\left[ {\left( {{\Delta ^\pi }{W_l}} \right){V_{{T_{j + 1}}}}} \right]\\
&\overset{\left(i\right)}= \frac{1}{{\Delta _l^\pi }}{E_{{t_l}}}\left[ {\left( {{\Delta ^\pi }{W_l}} \right){\left(V_{{T_{j + 1}}}-E_{T_{j}}[V_{T_{j+1}}]\right)}} \right]\\
&\overset{\left(ii\right)}= \frac{1}{\Delta_{l}^{\pi}}E_{t_{l}}\left[\left( {{\Delta ^\pi }{W_l}} \right)\left(M_{T_{j+1}}-M_{T_{j}}\right)\right]\\
&\overset{\left(iii\right)}= \frac{1}{\Delta_{l}^{\pi}}E_{{t_l}}\left[\left(\int_{{t_l}}^{t_{l+1}}d W_{s}\right)\left(\int_{{T_j}}^{T_{j+1}}\phi_{s}d W_{s}+\int_{{T_j}}^{T_{j+1}} {\int_{{\mathbb{R}^d}} {\psi \left( {s,y} \right)} } \tilde{\mathcal{P}}\left( {ds,dy} \right)\right)\right]\\
&\overset{\left(iv\right)}= \frac{1}{\Delta_{l}^{\pi}}E_{{t_l}}\left[\int_{{t_l}}^{t_{l+1}}\phi_{s}ds\right],
\end{split}
\end{equation*}
where the equality (\emph{i}) follows the independent increment property of $W(t)$, equality (\emph{ii}) uses \eqref{eq:2:8}, equality (\emph{iii}) uses \eqref{eq:3:3}, and equality (\emph{iv}) follows the Ito's isometry and the independence between $W$ and $\mathcal{P}$.\\
\\
Similarly, we have
\begin{equation*}
\footnotesize{\begin{split}{ \psi ^{\pi ,\mathcal{A}}}\left( {{t_l},y_{k}} \right)
&=\frac{1}{\mu \left( {\left[ {t_{l},t_{l+1}} \right] \times A_{k}} \right)}E_{t_{l}}\left[ {\tilde{\mathcal{P}}\left( {\left[ {{t_l},{t_{l + 1}}} \right] \times {A_k}} \right){V_{{T_{j + 1}}}}} \right]\\
&\overset{\left(i\right)}= \frac{1}{\mu \left( {\left[ {t_{l},t_{l+1}} \right] \times A_{k}} \right)}{E_{{t_l}}}\left[ {\tilde{\mathcal{P}}\left( {\left[ {{t_l},{t_{l + 1}}} \right] \times {A_k}} \right){\left(V_{{T_{j + 1}}}-E_{T_{j}}[V_{T_{j+1}}]\right)}} \right]\\
&\overset{\left(ii\right)}= \frac{1}{\mu \left( {\left[ {t_{l},t_{l+1}} \right] \times A_{k}} \right)}{E_{{t_l}}}\left[ {\tilde{\mathcal{P}}\left( {\left[ {{t_l},{t_{l + 1}}} \right] \times {A_k}} \right)}\left(M_{T_{j+1}}-M_{T_{j}}\right) \right]\\
&\overset{\left(iii\right)}= \frac{1}{\mu \left( {\left[ {t_{l},t_{l+1}} \right] \times A_{k}} \right)}E_{{t_l}}\biggl[\left(\int_{{t_l}}^{t_{l+1}}\int_{A_{k}}\tilde{\mathcal{P}}\left( {ds,dy} \right)\right)\biggl(\int_{{T_j}}^{T_{j+1}}\phi_{s}d W_{s}
+\int_{{T_j}}^{T_{j+1}} {\int_{{\mathbb{R}^d}} {\psi \left( {s,y} \right)} } \tilde{\mathcal{P}}\left( {ds,dy} \right)\biggr)\biggr]\\
&\overset{\left(iv\right)}= \frac{1}{\mu \left( {\left[ {t_{l},t_{l+1}} \right] \times A_{k}} \right)}E_{{t_l}}\left[\int_{{t_l}}^{t_{l+1}} {\int_{{A_{k}}} {\psi \left( {s,y} \right)} \mu\left( ds\times dy \right) }\right],
\end{split}}
\end{equation*}
where equality (\emph{i}) follows the independent increment property of $\mathcal{\tilde{P}}$, equality (\emph{ii}) uses \eqref{eq:2:8}, equality (\emph{iii}) uses \eqref{eq:3:3}, and equality (\emph{iv}) follows Ito's isometry and the independence between $\mathcal{P}$ and $W$.\\
\\
Furthermore, from \eqref{eq:2:8} and \eqref{eq:3:3}, we have:
\begin{equation}\label{eq:3:13}
\begin{split}{V_{{T_{j + 1}}}} - {E_{{T_j}}}\left[ {{V_{{T_{j + 1}}}}} \right]
&={M_{{T_{j + 1}}}} - {M_{{T_j}}}\\
&=\int_{T_{j}}^{T_{j+1}} {{\phi _s}dW\left( s \right)}  + \int_{T_{j}}^{T_{j+1}} {\int_{{\mathbb{R}^d}} {\psi \left( {s,y} \right)} } \tilde{\mathcal{P}}\left( {ds,dy} \right).
\end{split}
\end{equation}
If we reorganize \eqref{eq:2:1} and \eqref{eq:3:13} together as the following Forward-Backward SDE (FBSDE) on $\left[T_{j},T_{j+1}\right]$
\begin{equation*}
\left\{\begin{array}{l}
X_{t}= X_{T_{j}}+\int_{T_{j}}^{t}b(s,X_{s})ds+\int_{T_{j}}^{t}\sigma(s,X_{s})dW_{s}+\int_{T_{j}}^{t}\int_{\mathbb{R}^{d}}{J\left( {s,X\left( s \right),y} \right)} \tilde{\mathcal{P}}\left( {ds,dy} \right)
\\
Y_{t}:=v\left( T_{j+1},X_{T_{j}}\right)-\int_{t}^{T_{j+1}}\phi_{s}dW_{s}-\int_{t}^{T_{j+1}} {\int_{{\mathbb{R}^d}} {\psi \left( {s,y} \right)} } \tilde{\mathcal{P}}\left( {ds,dy} \right)
\end{array}\right.,
\end{equation*}
then according to the results obtained by \cite{bouchard2008discrete} (see Theorem 2.1 in \cite{bouchard2008discrete}), we have
\begin{equation}\label{eq:3:14}
E\left[\sum\limits_{T_{j} \le {t_l} < {T_{j+1 }}}\int_{{t_l}}^{t_{l+1}}|\phi_{s}-\phi _{{t_l}}^{\pi ,\mathcal{A}}|^{2} ds\right] \leq C_{j}|\pi|,
\end{equation}
and
\begin{equation}\label{eq:3:15}
E\left[\sum\limits_{T_{j} \le {t_l} < {T_{j+1 }}}\sum\limits_{k = 1}^\mathcal{K}
\int_{{t_l}}^{t_{l+1}}\int_{A_{k}}|\psi(s,y)-{\psi ^{\pi ,\mathcal{A}}}\left( {{t_l},y_{k}} \right)|^{2} \mu\left( ds\times dy \right)\right]\leq C_{j}|\pi|,
\end{equation}
for some constant $C_{j}$. Therefore, we have
\begin{equation}\label{eq:3:16}
\begin{split}& \quad \; E\left[\max_{0\leq j \leq \mathcal{J}} \lvert M_{{T_j}}^{\pi ,\mathcal{A}}-{M_{{T_j}}}\rvert^{2}\right] \overset{\left(i\right)}\leq 4E\left[\lvert M_{T}^{\pi ,\mathcal{A}}-M_{T}\rvert^2\right] \\
&= 4E\biggl[\biggl\lvert \sum_{j=0}^{\mathcal{J}-1}\biggl(\sum\limits_{T_{j}\leq t_{l} < T_{j+1}}\int_{t_{l}}^{t_{l+1}} {\left({\phi _t}-\phi_{t_{l}}^{\pi,\mathcal{A}}\right)dW\left( t \right)}\\
&\quad + \sum\limits_{T_{j}\leq t_{l} < T_{j+1}}\sum_{k=1}^{\mathcal{K}}\int_{t_{l}}^{t_{l+1}} {\int_{A_{k}} {\left(\psi \left( {t,y} \right)-\psi^{\pi,\mathcal{A}}\left(t_{l},y_{k}\right)\right)} } \tilde{\mathcal{ P}}\left( {dt,dy} \right) \biggr)\biggr\rvert^{2}\biggr]\\
&\overset{\left(ii\right)}=4\sum \limits_{j=0}^{\mathcal{J}-1}\biggl(E\left[\sum\limits_{T_{j} \leq {t_l} < {T_{j+1}}}\int_{{t_l}}^{t_{l+1}}|\phi_{s}-\phi _{{t_l}}^{\pi ,\mathcal{A}}|^{2} ds\right]\\
&\quad +E\left[\sum\limits_{ T_{j} \leq {t_l} < {T_{j+1 }}}\sum\limits_{k = 1}^{\mathcal{K}}\int_{{t_l}}^{t_{l+1}}\int_{A_{k}}|\psi(s,y)-{\psi ^{\pi ,\mathcal{A}}}\left( {{t_l},y_{k}} \right)|^{2}\mu\left( ds\times dy \right)\right]\biggr)\\
&\leq C|\pi|,
\end{split}
\end{equation}
where the inequality $(i)$ follows Doob's $L^{p}$ maximal inequality, equality $(ii)$ follows Ito's isometry and the independence between $\mathcal{P}$ and $W$, and $C=8\sum\limits_{j=0}^{\mathcal{J}-1}C_{j}$.\qedhere
\end{proof}
According to the relationship between $M$ and $V_{0}^ {up} \left(M \right)$ in \eqref{eq:2:6}, we can immediately obtain the following corollary on the quality of uppers bounds.
\begin{corollary}\label{cor:3:1}
Under the assumptions of Theorem \ref{thm:3:4}, we have
\begin{equation*}
|V_{0}^ {up} (M^{\pi ,\mathcal{A}})-V_{0}^ {up} \left(M \right)|^{2} \leq C |\pi|.
\end{equation*}
\end{corollary}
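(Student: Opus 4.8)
The plan is to exploit the fact that the dual upper bound in \eqref{eq:2:6} is a maximum functional composed with an expectation, and the maximum is non-expansive; consequently the gap between the two upper bounds is governed entirely by the gap between the two martingales, a quantity already controlled in Theorem \ref{thm:3:4}. Concretely, I would begin from the definition \eqref{eq:2:6} and write
\[
V_{0}^{up}(M^{\pi,\mathcal{A}}) - V_{0}^{up}(M) = E\left[\max_{0\le j\le \mathcal{J}}\left(H_{T_j} - M_{T_j}^{\pi,\mathcal{A}}\right)\right] - E\left[\max_{0\le j\le \mathcal{J}}\left(H_{T_j} - M_{T_j}\right)\right],
\]
observing that the payoffs $H_{T_j}$ are identical in both terms and will therefore drop out as soon as the Lipschitz estimate for the maximum is applied.

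Next I would invoke the elementary non-expansiveness of the maximum, $\lvert \max_j a_j - \max_j b_j\rvert \le \max_j \lvert a_j - b_j\rvert$, taking $a_j = H_{T_j} - M_{T_j}^{\pi,\mathcal{A}}$ and $b_j = H_{T_j} - M_{T_j}$ so that $a_j - b_j = M_{T_j} - M_{T_j}^{\pi,\mathcal{A}}$. Pulling the absolute value inside the expectation via $\lvert E[\cdot]\rvert \le E[\lvert\cdot\rvert]$ then yields
\[
\lvert V_{0}^{up}(M^{\pi,\mathcal{A}}) - V_{0}^{up}(M)\rvert \le E\left[\max_{0\le j\le \mathcal{J}}\lvert M_{T_j} - M_{T_j}^{\pi,\mathcal{A}}\rvert\right].
\]

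Finally I would square both sides and apply Jensen's inequality $\left(E[Z]\right)^{2} \le E[Z^{2}]$ with $Z = \max_j \lvert M_{T_j} - M_{T_j}^{\pi,\mathcal{A}}\rvert$, which gives
\[
\lvert V_{0}^{up}(M^{\pi,\mathcal{A}}) - V_{0}^{up}(M)\rvert^{2} \le E\left[\max_{0\le j\le \mathcal{J}}\lvert M_{T_j} - M_{T_j}^{\pi,\mathcal{A}}\rvert^{2}\right] \le C\lvert\pi\rvert,
\]
where the last inequality is exactly the conclusion of Theorem \ref{thm:3:4}, so the same constant $C$ carries over verbatim. Since all the analytical work has been front-loaded into Theorem \ref{thm:3:4}, there is no real obstacle here; the only points that merit a moment's care are the non-expansiveness of the maximum functional (which makes the $H_{T_j}$ terms cancel without any regularity assumption on the payoff) and the correct orientation of the Jensen step, so that the $L^{2}$-bound of Theorem \ref{thm:3:4} is the one being applied.
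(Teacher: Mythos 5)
Your proof is correct and is exactly the argument the paper has in mind: the paper gives no explicit proof, simply asserting the corollary follows "immediately" from \eqref{eq:2:6} and Theorem \ref{thm:3:4}, and the chain you supply (non-expansiveness of the maximum, $\lvert E[\cdot]\rvert\le E[\lvert\cdot\rvert]$, Jensen, then Theorem \ref{thm:3:4}) is the standard way to make that immediacy precise.
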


%\begin{rem}\label{rem:3.1}
%Theorem \ref{thm:3:1} above is similar to Theorem 2.1 in \cite{Belomestny:2009} in terms of result and proof. Specifically, we generalize their convergence results on martingales driven by Wiener process to martingales jointly driven by Wiener process and Poisson random measure. The contribution here is that we come up with some techniques to discretize the integral with respect to the Poisson random measure $\mathcal{P}$ and bound the difference between the resulting Ito sum with the original integral.
%\end{rem}

Corollary \ref{cor:3:1} implies that, if we want to obtain a tight upper bound $V_{0}^ {up} (M^{\pi ,\mathcal{A}})$, we have to partition $\left[0,T\right]$ sufficiently small. Another key procedure to ensure the successful implementation of our algorithm is to find a good way to approximate $\phi^{\pi ,\mathcal{A}}=\{\phi _{{t_l}}^{\pi ,\mathcal{A}}:l=0,...,\mathcal{L}\}$ in \eqref{eq:3:11} and $\psi^{\pi ,\mathcal{A}}=\{\psi^{\pi ,\mathcal{A}}(t_{l},y_{k}):l=0,...,\mathcal{L},k=0,...,\mathcal{K}\}$ in \eqref{eq:3:12} sufficiently well with the least computational effort. In next section, we will approximate $\phi^{\pi ,\mathcal{A}}$ and $\psi^{\pi ,\mathcal{A}}$ via a least-squares regression-based approach and describe the detailed algorithm towards the upper bounds on the Bermudan option price.
\section{True Martingale Algorithm}
We will formally describe the T-M algorithm based on the construction of the martingale approximation $M^{\pi ,\mathcal{A}}$ in section 3. The outline of the T-M algorithm consists of four steps in order: generating a suboptimal exercise strategy $\tilde{\tau}$, approximating the integrands $\phi^{\pi ,\mathcal{A}}$ and $\psi^{\pi ,\mathcal{A}}$, constructing the martingale approximation $M^{\pi ,\mathcal{A}}$, and generating true upper bounds ${V_{0}^{up}}\left(\hat{M}^{\pi,\mathcal{A}} \right)$ on the option price.\\
\\
First, let's start with generating the suboptimal exercise strategy $\tilde{\tau}$. It not only provides the lower bound, but also plays an important role in approximating the integrands $\phi^{\pi ,\mathcal{A}}$ and $\psi^{\pi ,\mathcal{A}}$. We adopt the least-squares regression-based approach proposed by \cite{longstaff:2001} to generate the suboptimal exercise strategy $\tilde{\tau}$ and the corresponding approximation of option price process ${\bar{V}_{{T_j}}}$ at time $T_{j}$, of the form
\begin{equation}\label{eq:4:1}
{\bar{V}_{{T_j}}} = v\left( {{T_j},{X^{\bar{\pi},\bar{\mathcal{A}}}_{{T_j}}}} \right),
\end{equation}
where $\bar{\pi} \supset \pi$, $\bar{\mathcal{A}} \supset \mathcal{A}$ are employed to simulate the discretized asset price process $\{X^{\bar{\pi},\bar{\mathcal{A}}}\}$. \\
\\
Second, let us approximate the integrands $\phi^{\pi ,\mathcal{A}}$ and $\psi^{\pi ,\mathcal{A}}$. To avoid confusion, we denote
\begin{equation*}
\small{\left\{ \begin{array}{l}
\bar{\phi} _{{t_l}}^{\pi ,\mathcal{A}} = \frac{1}{{\Delta _l^\pi }}E_{t_{l}}\left[ {\left( {{\Delta ^\pi }{W_l}} \right)v\left( {{T_{j + 1}},{X^{\bar{\pi},\bar{\mathcal{A}}}_{T_{j+1}}}} \right)} \right],\;\;\;{T_j} \le {t_l} < {T_{j + 1}}\\
{\bar{\psi} ^{\pi ,\mathcal{A}}}\left( {{t_l},y_{k}} \right) = \frac{1}{\mu \left( {\left[ {t_{l},t_{l+1}} \right] \times A_{k}} \right)}E_{t_{l}}\left[ {\tilde{P}\left( {\left[ {{t_l},{t_{l + 1}}} \right] \times {A_k}} \right)v\left( {{T_{j + 1}},{X^{\bar{\pi},\bar{\mathcal{A}}}_{T_{j+1}}}} \right)} \right],{T_j} \le {t_l} < {T_{j + 1}},1 \le k \le \mathcal{K}
\end{array} \right.}
\end{equation*}
as the counterparts of $\phi _{{t_l}}^{\pi ,\mathcal{A}}$ and ${\psi^{\pi ,\mathcal{A}}}\left( {{t_l},y_{k}} \right)$ respectively, under the discretized asset price ${X^{\bar{\pi},\bar{\mathcal{A}}}_{T_{j+1}}}$. Inspired by \cite{longstaff:2001}'s least-squares regression approach to approximating the continuation values, we apply a similar regression technique to approximate ${\bar{\phi} ^\pi }$ and ${\bar{\psi} ^{\pi ,\mathcal{A}}}$. Specifically, the function bases chosen to regress $\bar{\phi} _{{t_l}}^{\pi ,\mathcal{A}}$ and ${\bar{\psi} ^{\pi ,\mathcal{A}}}\left( {{t_l},y_{k}} \right)$ are row function vectors $\rho^{W}\left(t_{l},{X^{\bar{\pi},\bar{\mathcal{A}}}_{{t_l}}}\right)=\left(\rho^{W}_{i}(t_{l},{X^{\bar{\pi},\bar{\mathcal{A}}}_{{t_l}}})\right)_{i=1,...,I_{1}}$  and  $\rho^{\mathcal{P}}\left(t_{l},y_{k},{X^{\bar{\pi},\bar{\mathcal{A}}}_{{t_l}}}\right)=\left(\rho_{i}^{\mathcal{P}}(t_{l},y_{k},{X^{\bar{\pi},\bar{\mathcal{A}}}_{{t_l}}})\right)_{i=1,...,I_{2}}$  respectively, where $I_{1}$ and $I_{2}$ are the dimensions of the function bases. If we simulate $N$ independent samples of Wiener increments ${\Delta ^\pi }{W_l}$, denoted by $\left\{ {\Delta_{n} ^{\pi }{W_l}:\;\;l = 1,...\mathcal{L}, n=1,...,N } \right\}$,
and $N$ independent samples of Poisson increments \\${P}\left( {\left[ {{t_l},{t_{l + 1}}} \right] \times {A_k}} \right)$, denoted by
$\{ {P}_{n}\left( {\left[ {{t_l},{t_{l + 1}}} \right] \times {A_k}} \right):l=1,...,\mathcal{L},k=1,...,\mathcal{K},n=1,...,N \}$,
and based on which we construct the sample paths of the asset price $\{{X^{\bar{\pi},\bar{\mathcal{A}}}_{{t_l},n}}\}_{l=0,...,\mathcal{L},n=1,...,N}$, then we can obtain the regressed coefficients ${{\hat \alpha }_{t_l}} $ and ${{\hat \beta }_{t_{l},k}}$, for $T_{j}\leq t_{l}< T_{j+1}$ and $1\leq k\leq \mathcal{K}$, via
\begin{equation}\label{eq:4:2}
\left\{ \begin{array}{l}
{{\hat \alpha }_{t_l}} \quad= \arg \;\mathop {\min }\limits_{\alpha  \in {\mathbb{R}^{I_{1}}}} \left\{ {\sum\limits_{n = 1}^{N} {{{\left| {\frac{{\Delta _n^\pi {{ W}_l}}}{{\Delta _l^\pi }}H_{\tilde{\tau}_{j+1}}\left({X^{\bar{\pi},\bar{\mathcal{A}}}_{T_{j+1},n}} \right) - \rho^{W}\left(t_{l},{X^{\bar{\pi},\bar{\mathcal{A}}}_{{t_l},n}}\right)\alpha } \right|}^2}} } \right\}\\
{{\hat \beta }_{t_{l},k}} \quad= \arg \;\mathop {\min }\limits_{\beta  \in {\mathbb{R}^{I_{2}}}} \left\{ {\sum\limits_{n = 1}^{N} {{{\left| {\frac{{\tilde{P}}_{n}\left( {\left[ {{t_l},{t_{l + 1}}} \right] \times {A_k}} \right)}{\mu \left( {\left[ {t_{l},t_{l+1}} \right] \times A_{k}} \right)}H_{\tilde{\tau}_{j+1}}\left({X^{\bar{\pi},\bar{\mathcal{A}}}_{T_{j+1},n}} \right) - \rho^{\mathcal{P}}\left(t_{l},y_{k},{X^{\bar{\pi},\bar{\mathcal{A}}}_{{t_l},n}}\right)\beta } \right|}^2}} } \right\}
\end{array} \right.,
\end{equation}
where we employ the tower property to avoid nested simulation, as described in \eqref{eq:3:11} and \eqref{eq:3:12}. Therefore we can compute the approximations of the integrands $\bar{\phi} _{{t_l}}^{\pi,\mathcal{A}}$ and
${\bar{\psi} ^{\pi ,\mathcal{A}}}\left( {t_l},y_{k} \right)$, denoted by ${{\hat \phi }^{\pi ,\mathcal{A}}}\left( {{t_l},x} \right)$ and ${{\hat \psi}^{\pi,\mathcal{A}}}\left( {{t_l},y_{k},x} \right)$ respectively, via
\begin{equation}\label{eq:4:3}
\left\{ \begin{array}{l}
{{\hat \phi }^{\pi ,\mathcal{A}}}\left( {{t_l},x} \right) = \rho^{W} \left( {t_{l},x} \right){{\hat \alpha }_{t_l}}\\
{{\hat \psi }^{\pi ,\mathcal{A}}}\left( {{t_l},y_{k},x} \right) = \rho^{\mathcal{P}} \left( {t_{l},y_{k},x} \right){{\hat \beta }_{t_{l},k}}
\end{array} \right.\;\;\;.
\end{equation}
\\
Next, with fixed $\hat{\alpha}$ and $\hat{\beta}$, we construct an approximation of $M^{\pi,\mathcal{A}}$, denoted by $\hat{M}^{\pi,\mathcal{A}}$, by combining the approximation ${{\hat \phi }^{\pi ,\mathcal{A}}}$ and ${{\hat \psi }^{\pi ,\mathcal{A}}}$ of the integrands with the Euler scheme of system \eqref{eq:2:1}. Precisely, we have
\begin{equation}\label{eq:4:4}
\small{\hat{M}^{\pi,\mathcal{A}}_{{T_j}}:= \sum\limits_{0 \le {t_l} < {T_j}} {{{\hat \phi }^{\pi ,\mathcal{A}}}\left( {{t_l},{X^{\bar{\pi},\bar{\mathcal{A}}}_{{t_l}}}} \right)} \left( {{\Delta ^\pi }{W_l}} \right) + \sum\limits_{0 \le {t_l} < {T_j}} {\sum\limits_{k = 1}^\mathcal{K} {{{\hat \psi }^{\pi ,\mathcal{A}}}\left( {{t_l},y_{k},{X^{\bar{\pi},\bar{\mathcal{A}}}_{{t_l}}}} \right)\tilde{P}\left( {\left[ {{t_l},{t_{l + 1}}} \right] \times {A_k}} \right)} }.}
\end{equation}
Obviously, $\hat{M}^{\pi,\mathcal{A}}$ remains to be a martingale adapted to the filtration $\{\mathcal{F}_{T_{j}}:j=0,...,\mathcal{J}\}$. Consequently, ${V_{0}^{up}}\left(\hat{M}^{\pi,\mathcal{A}}\right)$ is a true upper bound on the Bermudan option price $V_{0}^ {\ast}$. To this end we have finished the construction of the true martingale approximation $\hat{M}^{\pi,\mathcal{A}}$. A natural question is that how good $\hat{M}^{\pi,\mathcal{A}}$ approximates the objective martingale $M_{T_{j}}$ given that the Euler scheme of the asset-price process and the regression \eqref{eq:4:2} both incur bounded errors. We address this question in the following theorem.
\begin{thm}\label{thm:4:1}
Consider $\bar{V}_{T_{j}}=v\left( {{T_j},{X^{\bar{\pi},\bar{\mathcal{A}}}_{{T_j}}}} \right)$, for $j=1,2,...,\mathcal{J}$, where $v(T_{j},\cdot)$ are Lipschitz continuous functions, $X^{\bar{\pi},\bar{\mathcal{A}}}_{{T_j}}$ is the Euler discretization of $X_{T_{j}}$ corresponding to partitions $\bar{\pi} \supset \pi$ and $\bar{\mathcal{A}}\supset\mathcal{A}$. Let $\bar{M}_{{T_j}}$ be the martingale component of $\bar{V}_{T_{j}}$. Assume that for $0 \le l \le \mathcal{L}-1$ and $1\le k \le \mathcal{K}$,
\begin{equation*}
\left\{ \begin{array}{l}
\|{{\hat \phi }^{\pi ,\mathcal{A}}}\left( {{t_l},x} \right)-{{\bar{\phi}}^{\pi ,\mathcal{A}}}\left( {{t_l},x} \right)\|_{1}^{2} \leq \epsilon \\
\|{{\hat \psi }^{\pi ,\mathcal{A}}}\left( {{t_l},y_{k},x} \right)-{{ \bar{\psi} }^{\pi ,\mathcal{A}}}\left( {{t_l},y_{k},x} \right)\|_{1}^{2} \leq \epsilon
\end{array} \right.
\end{equation*}
for some positive number $\epsilon$, then there exists a constant $\bar{C}>0$ such that
\[ E\left[\max_{0\leq j \leq \mathcal{J}}|\hat{M}_{{T_j}}^{\pi ,\mathcal{A}}-{\bar{M}_{{T_j}}}|^{2}\right]\leq \bar{C}\left(|\pi| + \epsilon \right).\]
\end{thm}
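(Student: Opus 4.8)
The plan is to interpolate a single intermediate martingale between $\hat{M}^{\pi,\mathcal{A}}$ and $\bar{M}$, so that the total error splits into a pure \emph{regression} error and a pure \emph{time-discretization} error. Define $\bar{M}^{\pi,\mathcal{A}}=\{\bar{M}^{\pi,\mathcal{A}}_{T_j}\}$ to be the Ito-sum martingale built from the \emph{exact} conditional-expectation integrands $\bar{\phi}^{\pi,\mathcal{A}}$ and $\bar{\psi}^{\pi,\mathcal{A}}$ attached to $\bar{V}$, i.e. the analogue of \eqref{eq:4:4} with $\hat{\phi}^{\pi,\mathcal{A}},\hat{\psi}^{\pi,\mathcal{A}}$ replaced by $\bar{\phi}^{\pi,\mathcal{A}},\bar{\psi}^{\pi,\mathcal{A}}$ and evaluated along the \emph{same} discretized paths $X^{\bar{\pi},\bar{\mathcal{A}}}_{t_l}$. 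Then I would use $|a+b|^2\le 2|a|^2+2|b|^2$ to write
\[ E\Big[\max_{0\le j\le \mathcal{J}}|\hat{M}^{\pi,\mathcal{A}}_{T_j}-\bar{M}_{T_j}|^2\Big]\;\le\;2\,E\Big[\max_j|\hat{M}^{\pi,\mathcal{A}}_{T_j}-\bar{M}^{\pi,\mathcal{A}}_{T_j}|^2\Big]+2\,E\Big[\max_j|\bar{M}^{\pi,\mathcal{A}}_{T_j}-\bar{M}_{T_j}|^2\Big]. \]

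For the second term I would simply invoke Theorem \ref{thm:3:4}. Since $\bar{V}_{T_j}=v(T_j,X^{\bar{\pi},\bar{\mathcal{A}}}_{T_j})$ with $v(T_j,\cdot)$ Lipschitz, the pair $(\bar{M}^{\pi,\mathcal{A}},\bar{M})$ is exactly the output of Algorithm \ref{alg:1} applied to the value process $\bar{V}$, so Theorem \ref{thm:3:4} gives $E[\max_j|\bar{M}^{\pi,\mathcal{A}}_{T_j}-\bar{M}_{T_j}|^2]\le C|\pi|$. The only delicate point is that the underlying is now the Euler scheme $X^{\bar{\pi},\bar{\mathcal{A}}}$ rather than the exact SDE, so the FBSDE estimates behind \eqref{eq:3:14}--\eqref{eq:3:15} must be read off for the discretized scheme; this is precisely the discrete-time FBSDE regime treated in \cite{bouchard2008discrete}, so the $|\pi|$-rate carries over.

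For the first term I would exploit that $\hat{M}^{\pi,\mathcal{A}}-\bar{M}^{\pi,\mathcal{A}}$ is itself a martingale adapted to $\{\mathcal{F}_{T_j}\}$: at each node the coefficients $(\hat{\phi}^{\pi,\mathcal{A}}-\bar{\phi}^{\pi,\mathcal{A}})(t_l,X^{\bar{\pi},\bar{\mathcal{A}}}_{t_l})$ and $(\hat{\psi}^{\pi,\mathcal{A}}-\bar{\psi}^{\pi,\mathcal{A}})(t_l,y_k,X^{\bar{\pi},\bar{\mathcal{A}}}_{t_l})$ are $\mathcal{F}_{t_l}$-measurable, while $\Delta^\pi W_l$ and $\tilde{P}([t_l,t_{l+1}]\times A_k)$ are zero-mean given $\mathcal{F}_{t_l}$ and orthogonal across $l,k$. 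Applying Doob's $L^2$ maximal inequality (as in step $(i)$ of Theorem \ref{thm:3:4}) and expanding the square at $T=T_{\mathcal{J}}$ so that all cross terms vanish, together with $E[\Delta^\pi W_l(\Delta^\pi W_l)^{\top}\mid\mathcal{F}_{t_l}]=\Delta_l^\pi I$ and $E[\tilde{P}([t_l,t_{l+1}]\times A_k)^2\mid\mathcal{F}_{t_l}]=\mu([t_l,t_{l+1}]\times A_k)$, yields
\[ E\Big[\max_j|\hat{M}^{\pi,\mathcal{A}}_{T_j}-\bar{M}^{\pi,\mathcal{A}}_{T_j}|^2\Big]\le 4\sum_{l}\Delta_l^\pi\,E\big[|\hat{\phi}^{\pi,\mathcal{A}}-\bar{\phi}^{\pi,\mathcal{A}}|^2\big]+4\sum_{l}\sum_{k=1}^{\mathcal{K}}\mu([t_l,t_{l+1}]\times A_k)\,E\big[|\hat{\psi}^{\pi,\mathcal{A}}-\bar{\psi}^{\pi,\mathcal{A}}|^2\big]. \]
Using $|\cdot|_2\le\|\cdot\|_1$, the hypotheses $\|\hat{\phi}^{\pi,\mathcal{A}}-\bar{\phi}^{\pi,\mathcal{A}}\|_1^2\le\epsilon$ and $\|\hat{\psi}^{\pi,\mathcal{A}}-\bar{\psi}^{\pi,\mathcal{A}}\|_1^2\le\epsilon$ bound each summand by $\epsilon$ uniformly, and since $\sum_l\Delta_l^\pi=T$ and $\sum_{l,k}\mu([t_l,t_{l+1}]\times A_k)=\mu([0,T]\times\mathbb{R}^d)<\infty$, the whole term is a constant multiple of $\epsilon$.

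Combining the two pieces gives $E[\max_j|\hat{M}^{\pi,\mathcal{A}}_{T_j}-\bar{M}_{T_j}|^2]\le\bar{C}(|\pi|+\epsilon)$, with $\bar{C}$ depending only on $T$, the total intensity $\mu([0,T]\times\mathbb{R}^d)$, and the constant $C$ of Theorem \ref{thm:3:4}. I expect the genuine obstacle to lie entirely in the second term: one must verify that the $|\pi|$-rate of Theorem \ref{thm:3:4} genuinely survives when the value process sits on the Euler-discretized $X^{\bar{\pi},\bar{\mathcal{A}}}$, so that the discretization error of \cite{bouchard2008discrete} contributes no extra term failing to vanish with $|\pi|$. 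The first term, by contrast, is a routine orthogonality-plus-Doob computation that mirrors Theorem \ref{thm:3:4} line by line and merely converts the assumed regression bounds into the $\epsilon$ contribution.
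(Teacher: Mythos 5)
Your two-term split through the intermediate martingale $\bar{M}^{\pi,\mathcal{A}}$ is a cleaner decomposition than the paper's, and your treatment of the regression term is essentially the paper's bound for its term $(\ast)$: orthogonality of the increments across $l$ and $k$, Ito isometry, and the hypothesis give a bound of order $\left(T+\mu\left([0,T]\times\mathbb{R}^{d}\right)\right)\epsilon$. That half is fine.

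The gap is in the second term, and you have named it yourself without closing it. Theorem \ref{thm:3:4} is proved for a value process of the form $v(T_j,X_{T_j})$ with $X$ the exact solution of \eqref{eq:2:1}; its proof rests on the FBSDE regularity estimates \eqref{eq:3:14}--\eqref{eq:3:15}, which are obtained for the continuous forward dynamics. You apply it to $\bar{V}_{T_j}=v(T_j,X^{\bar{\pi},\bar{\mathcal{A}}}_{T_j})$, whose underlying is the Euler scheme, and assert that the $|\pi|$-rate "carries over" from \cite{bouchard2008discrete}. That is an unproved and nontrivial claim, and it is exactly where the remaining work lies. The paper avoids it by inserting two further intermediates, writing $\hat{M}^{\pi,\mathcal{A}}-\bar{M}=(\hat{M}^{\pi,\mathcal{A}}-\bar{M}^{\pi,\mathcal{A}})+(\bar{M}^{\pi,\mathcal{A}}-M^{\pi,\mathcal{A}})+(M^{\pi,\mathcal{A}}-M)+(M-\bar{M})$, so that Theorem \ref{thm:3:4} is invoked only for the exact process (third difference), while the second and fourth differences --- which carry the entire Euler-discretization error --- are bounded by $O(|\pi|)$ directly from the Lipschitz continuity of $v(T_j,\cdot)$, the recursion \eqref{eq:2:8}, a conditional-variance/orthogonality estimate, and the strong rate $E\left[|X_{T_j}-X^{\bar{\pi},\bar{\mathcal{A}}}_{T_j}|^2\right]\le\bar{L}|\bar{\pi}|$. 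To repair your argument you must either carry out the discrete-forward analogue of \eqref{eq:3:14}--\eqref{eq:3:15}, or adopt the paper's two extra comparisons, which reduce the Euler error to elementary estimates you already have at hand.
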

\begin{proof} To avoid confusion of notations, we denote
\[\bar{M}_{{T_j}}^{\pi ,\mathcal{A}}\;: = \;\sum\limits_{0 \le {t_l} < {T_{j }}} {\bar{\phi} _{{t_l}}^{\pi ,\mathcal{A}}\left( {{\Delta ^\pi }{W_l}} \right) }  + \sum\limits_{0 \le {t_l} < {T_{j}}} {\sum\limits_{k = 1}^\mathcal{K} {{\bar{\psi} ^{\pi ,\mathcal{A}}}\left( {{t_l},y_{k}} \right)\tilde{P}\left( {\left[ {{t_l},{t_{l + 1}}} \right] \times {A_k}} \right)} } .\]
Then, we have
\begin{equation*}
\begin{split} &E\left[\max_{0\leq j \leq \mathcal{J}} \lvert \hat{M}_{{T_j}}^{\pi ,\mathcal{A}}-{\bar{M}_{{T_j}}}\rvert^{2}\right]\overset{\left(i\right)}\leq 4E\left[\lvert \hat{M}_{T}^{\pi ,\mathcal{A}}-\bar{M}_{T}\rvert^2\right] \\
\overset{\left(ii\right)}\leq &16E\left[\lvert \hat{M}_{T}^{\pi,\mathcal{A}}-\bar{M}_{T}^{\pi,\mathcal{A}}\rvert^2+|\bar{M}_{T}^{\pi,\mathcal{A}}-M_{T}^{\pi,\mathcal{A}}|^2+
|M_{T}^{\pi,\mathcal{A}}-M_{T}|^2+|M_{T}-\bar{M}_{T}|^2\right]\\
=&16[(\ast)+(\ast\ast)+(\ast\ast\ast)+(\ast\ast \ast \ast)],
\end{split}
\end{equation*}
where inequality ($i$) follows Doob's $L^{p}$ maximal inequality and inequality ($ii$) follows Cauchy's inequality. From the assumption, we have
\begin{equation*}
\begin{split}(\ast)&=E\left[\lvert \hat{M}_{T}^{\pi,\mathcal{A}}-\bar{M}_{T}^{\pi,\mathcal{A}}\rvert^2\right] \\
&=\sum \limits_{j=0}^{\mathcal{J}-1}\biggl(E\left[\sum\limits_{T_{j} \leq {t_l} < {T_{j+1}}}\int_{{t_l}}^{t_{l+1}}\left(|\hat{\phi} _{{t_l}}^{\pi ,\mathcal{A}}-\bar{\phi}_{{t_l}}^{\pi ,\mathcal{A}}|^{2}\right) ds\right]\\
&\quad +E\left[\sum\limits_{ T_{j} \leq {t_l} < {T_{j+1 }}}\sum\limits_{k = 1}^{\mathcal{K}}\int_{{t_l}}^{t_{l+1}}\int_{A_{k}}\left(|{\hat{\psi} ^{\pi ,\mathcal{A}}}\left( {{t_l},y_{k}} \right)-{\bar{\psi} ^{\pi ,\mathcal{A}}}\left( {{t_l},y_{k}} \right)|^{2}\right)\mu\left( ds\times dy \right)\right]\biggr)\\
&\leq \left(\mu\left([0,T]\times \mathbb{R}^{d}\right)+T\right)\epsilon.
\end{split}
\end{equation*}
From Theorem \ref{thm:3:4}, we have
\[(\ast\ast\ast)=E\left[|M_{T}^{\pi,\mathcal{A}}-M_{T}|^2\right]\leq C|\pi|.\]
As for term $(\ast\ast\ast\ast)$, we have
\begin{equation*}
\begin{split}(\ast\ast\ast\ast)
&\overset{\left(i\right)}=E\left[\left|\sum\limits_{j=1}^{\mathcal{J}}\biggl(v\left(T_{j},X_{T_{j}}\right)-v\left(T_{j},X_{T_{j}}^{\bar{\pi},\bar{\mathcal{A}}}\right)-
E_{T_{j-1}}\left[v\left(T_{j},X_{T_{j}}\right)-v\left(T_{j},X_{T_{j}}^{\bar{\pi},\bar{\mathcal{A}}}\right)\right]\biggr)\right|^2\right]\\
&\overset{\left(ii\right)}\leq L \sum\limits_{j=1}^{\mathcal{J}}E\left[\left|X_{T_{j}}-X_{T_{j}}^{\bar{\pi},\bar{\mathcal{A}}}\right|^2\right]\overset{\left(iii\right)}\leq \bar{L}|\bar{\pi}|\leq \bar{L}|\pi|,
\end{split}
\end{equation*}
where $L$ and $\bar{L}$ are some constants. Here equality ($i$) follows \eqref{eq:2:8}, inequality ($ii$) follows the Lipschitz continuity of $v(T_{j},\cdot)$ and inequality ($iii$) follows the mild continuity conditions that $b$, $\sigma$ and $J$ satisfy.\\
\\
To this point the term left to estimate is $(\ast\ast)$. Notice that, for ${T_j} \le {t_l} < {T_{j + 1}}$ and $1 \le k \le \mathcal{K}$,
\begin{equation*}
\begin{aligned}
&E_{t_{l}}\left[ \left( {{\Delta ^\pi }{W_l}} \right)\left(v\bigl({T_{j+1}},X_{T_{j+1}}^{\bar{\pi},\bar{\mathcal{A}}}\bigr)-v\left( {T_{j+1}},X_{T_{j+1}} \right)\right)\right]^{2}\frac{1}{{\Delta _l^\pi }} \\
=&E_{t_{l}}\Biggl[ \left( {{\Delta ^\pi }{W_l}} \right)\biggl(E_{t_{l+1}}\left[v\bigl( {T_{j+1}},X_{T_{j+1}}^{\bar{\pi},\bar{\mathcal{A}}}\bigr)-v\left( {T_{j+1}},X_{T_{j+1}} \right)\right] \\
&\quad-E_{t_{l}}\left[v\bigl( {T_{j+1}},X_{T_{j+1}}^{\bar{\pi},\bar{\mathcal{A}}}\bigr)-v\left( {T_{j+1}},X_{T_{j+1}} \right)\right]\biggr)\Biggr]^{2}\frac{1}{{\Delta _l^\pi }} \\
\leq &E_{t_{l}}\biggl[E_{t_{l+1}}\left[v\bigl( {T_{j+1}},X_{T_{j+1}}^{\bar{\pi},\bar{\mathcal{A}}}\bigr)-v\left( {T_{j+1}},X_{T_{j+1}} \right)\right]^2
- E_{t_{l}}\left[v\bigl({T_{j+1}},X_{T_{j+1}}^{\bar{\pi},\bar{\mathcal{A}}}\bigr)-v\left({T_{j+1}},X_{T_{j+1}} \right)\right]^2\biggr],
\end{aligned}
\end{equation*}
and similarly
\begin{equation*}
\begin{aligned}
&E_{t_{l}}\left[  \tilde{P}\left( {\left[ {{t_l},{t_{l + 1}}} \right] \times {A_k}} \right)\left(v\bigl({T_{j+1}},X_{T_{j+1}}^{\bar{\pi},\bar{\mathcal{A}}}\bigr)-v\left( {T_{j+1}},X_{T_{j+1}} \right)\right)\right]^{2}\frac{1}{\mu \left( {\left[ {t_{l},t_{l+1}} \right] \times A_{k}} \right)} \\
\leq &E_{t_{l}}\biggl[E_{t_{l+1}}\left[v\bigl( {T_{j+1}},X_{T_{j+1}}^{\bar{\pi},\bar{\mathcal{A}}}\bigr)-v\left( {T_{j+1}},X_{T_{j+1}} \right)\right]^2
- E_{t_{l}}\left[v\bigl({T_{j+1}},X_{T_{j+1}}^{\bar{\pi},\bar{\mathcal{A}}}\bigr)-v\left({T_{j+1}},X_{T_{j+1}} \right)\right]^2\biggr].
\end{aligned}
\end{equation*}
Hence,
\begin{equation*}
\footnotesize{\begin{split}(\ast\ast)
&=\sum\limits_{j=0}^{\mathcal{J}-1}\sum\limits_{T_{j}\leq t_{l} < T_{j+1}} E\left[E_{t_{l}}\left[\left({{\Delta ^\pi }{W_l}}\right)\left(v\bigl( {T_{j+1}},X_{T_{j+1}}^{\bar{\pi},\bar{\mathcal{A}}}\bigr)-v\left( {T_{j+1}},X_{T_{j+1}} \right)\right)\right]^{2}\frac{1}{{\Delta _l^\pi }}\right]\\
&\quad + \sum\limits_{j=0}^{\mathcal{J}-1}\sum\limits_{T_{j}\leq t_{l} < T_{j+1}}\sum\limits_{k=1}^{\mathcal{K}} E\left[E_{t_{l}}\left[\tilde{P}\left( {\left[ {{t_l},{t_{l + 1}}} \right] \times {A_k}} \right)\left(v\bigl( {T_{j+1}},X_{T_{j+1}}^{\bar{\pi},\bar{\mathcal{A}}}\bigr)-v\left( {T_{j+1}},X_{T_{j+1}} \right)\right)\right]^{2}\frac{1}{\mu \left( {\left[ {t_{l},t_{l+1}} \right] \times A_{k}} \right)}\right]\\
&\leq 2\sum\limits_{j=0}^{\mathcal{J}-1}E\left[ \left| v\bigl( {T_{j+1}},X_{T_{j+1}}^{\bar{\pi},\bar{\mathcal{A}}}\bigr)-v\left( {T_{j+1}},X_{T_{j+1}} \right)\right|^{2}\right]\\
&\leq 2L \sum\limits_{j=1}^{\mathcal{J}}E\left[\left|X_{T_{j}}^{\bar{\pi},\bar{\mathcal{A}}}-X_{T_{j}}\right|^2\right]\leq 2\bar{L}|\bar{\pi}| \leq 2\bar{L}|\pi|.
\end{split}}
\end{equation*}
Therefore,
\begin{equation*}
\begin{split} &E\left[\max_{0\leq j \leq \mathcal{J}} \lvert \hat{M}_{{T_j}}^{\pi ,\mathcal{A}}-{\bar{M}_{{T_j}}}\rvert^{2}\right]
\leq 16[(\ast)+(\ast\ast)+(\ast\ast\ast)+(\ast\ast \ast \ast)]\\
&\leq 16[\left(\mu\left([0,T]\times \mathbb{R}^{d}\right)+T\right)\epsilon+2\bar{L}|\pi|+C|\pi|+2\bar{L}|\pi|]\\
&\leq \bar{C}(|\pi|+\epsilon),
\end{split}
\end{equation*}
where $\bar{C}=16 \max\left(4\bar{L}+C,\mu\left([0,T]\times \mathbb{R}^{d}\right)+T\right)$. \qedhere
\end{proof}
Finally, let's estimate ${V_{0}^{up}}\left(\hat{M}^{\pi,\mathcal{A}}\right)$ via \eqref{eq:2:5} by simulating a new set of $\bar{N}$ independent sample paths $\{X_{n}^{\bar{\pi},\bar{\mathcal{A}}}: n = 1,...,\bar{N} \}$. Precisely, an unbiased estimator for ${V_{0}^{up}}\left(\hat{M}^{\pi,\mathcal{A}} \right)$ is given as follows:
\begin{equation}\label{eq:4:5}
{\hat{V}_{0}^{up}}\left(\hat{M}^{\pi,\mathcal{A}} \right) = \frac{1}{{\bar N}}\sum\limits_{n = 1}^{\bar N} {\mathop {\max }\limits_{0 \le j \le \mathcal{J}} } \left[ {h\left( {{T_j},{X^{\bar{\pi},\bar{\mathcal{A}}}_{{T_j},n}}} \right) - \hat{M}^{\pi,\mathcal{A}}_{{T_j},n}} \right],
\end{equation}
where $\hat{M}^{\pi,\mathcal{A}}_{{T_j},n}$ denotes the realization of $\hat{M}^{\pi,\mathcal{A}}_{{T_j}}$ along the sample path ${X^{\bar{\pi},\bar{\mathcal{A}}}_{{T_j},n}}$. We can formally summarize these steps in the following Algorithm 2.
\begin{algorithm}\label{alg:2}
\caption{\quad \emph{\textbf{True Martingale Algorithm}}}\label{alg:2}
Step 1: Apply the L-S algorithm to generate a suboptimal exercise strategy $\tilde{\tau}$.\\
Step 2: Simulate $N$ independent samples of Wiener increments ${\Delta ^\pi }{W_l}$ and $N$ independent samples of Poisson increments ${P}\left( {\left[ {{t_l},{t_{l + 1}}} \right] \times {A_k}} \right)$, for $l=0,...,\mathcal{L}-1$ and $k=1,...,\mathcal{K}$; construct the sample paths of the asset price $\{{X^{\bar{\pi},\bar{\mathcal{A}}}_{{t_l},n}}\}_{l=0,...,\mathcal{L},n=1,...,N}$.\\
Step 3: Obtain the parameters $\hat{\alpha}=\{\hat{\alpha}_{t_{l}}\}_{l=0,...,\mathcal{L}}$ and $\hat{\beta}=\{\hat{\beta}(t_{l},k)\}_{l=0,...,\mathcal{L},k=0,...,\mathcal{K}}$ via least-squares regression \eqref{eq:4:2} with exercising $\tilde{\tau}$ along the sample paths $\{{X^{\bar{\pi},\bar{\mathcal{A}}}_{{t_l},n}}\}_{l=0,...,\mathcal{L},n=1,...,N}$.\\
Step 4: Simulate a new set of $\bar{N}$ independent sample paths $\{{X^{\bar{\pi},\bar{\mathcal{A}}}_{{t_l},n}}\}_{l=0,...,\mathcal{L},n=1,...,\bar{N}}$; compute $\hat{\phi}^{\pi,\mathcal{A}}$ and $\hat{\psi}^{\pi, \mathcal{A}} $ via \eqref{eq:4:3}; construct the martingale approximation $\hat{M}^{\pi,\mathcal{A}}$ via \eqref{eq:4:4}; obtain an unbiased estimator ${\hat{V}_{0}^{up}}(\hat{M}^{\pi,\mathcal{A}})$ for the true upper bound on the Bermudan option price $V_{0}^{\ast}$ via \eqref{eq:4:5}.
\end{algorithm}
\section{Numerical Experiments}
In this section, we will conduct numerical experiments to illustrate the computational efficiency of our proposed T-M algorithm on a Bermudan option pricing problem under a jump-diffusion model. The exact model we consider here falls into the class of jump-diffusion models (see \cite{merton1976option} and \cite{kou2002jump}) reviewed in section 1. Specifically, the asset prices evolve as follows:
\begin{equation}\label{eq:5:1}
\frac{dX\left( t \right)}{X\left(t^{-}\right)} = \left( r-\delta \right)dt + \sigma dW\left( t \right) + d\left( \sum\limits_{i=1}^{P(t)}\left(V_{i}-1\right)\right),
\end{equation}
where $r$ is the constant discount factor, $\delta$ is the constant dividend, $\sigma$ is the constant volatility, $X(t)= \left[X_{1}(t),...,X_{n}(t)\right]$ represents the asset price with a given initial price $X_{0}$,  $W(t) =  \left[W_{1}(t),...,W_{n}(t)\right]$ is a Wiener process, $P(t)$ is a Poisson process with intensity $\lambda$, and $\{V_{i}\}$ is a sequence of independent identically distributed (i.i.d.) nonnegative random variables such that $J=\log(V)$ is the jump amplitude with p.d.f. $f(y)$.  Here $J$ can follow a normal distribution (see \cite{merton1976option}) or a double-exponential distribution (see \cite{kou2008jump}) or various other reasonable distributions. For simplicity, we assume $J$ follows a one-dimensional ($d=1$) normal distribution $N(m,\theta^2)$. We also assume $W(t)$, $P(t)$ and $J$ are mutually independent. \\
\\
To connect dynamics \eqref{eq:5:1} with the jump-diffusion model \eqref{eq:2:1} we have mainly focused on, we should construct a function of $X(t)$, denoted by $S(t)=S(X(t))$ such that dynamics \eqref{eq:5:1} can be easily transformed to an equivalent dynamics jointly driven by the Wiener measure and a Poisson random measure $\mathcal{P}_{S}$. The following proposition provides an intuitive criterion in selecting such a function by explicitly defining the intensity function  $\mu \left( {dt \times dy} \right)$ for the unique Poisson random measure induced by a compound Poisson process.
\begin{prop}[\emph{Proposition 3.5 in \cite{cont2003financial}}]\label{thm:5:1}
Let $S(t)_{t>0}$ be a compound Poisson process with intensity $\lambda$ and jump size distribution $f$. Then the Poisson random
measure $\mathcal{P}_{S}$ induced by $S(t)_{t>0}$ on $\left[ {0,\infty} \right] \times {\mathbb{R}^d}$ has intensity measure $\mu \left( {dt \times dy} \right) = \lambda f\left( {y} \right)dydt$.
\end{prop}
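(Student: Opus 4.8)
The plan is to verify directly that the jump measure $\mathcal{P}_{S}$ satisfies the three defining properties of a Poisson random measure in Definition \ref{def:2.1}, with the candidate intensity $\mu(dt\times dy)=\lambda f(y)\,dy\,dt$. First I would write the compound Poisson process as $S(t)=\sum_{i=1}^{N(t)}Y_{i}$, where $N$ is a rate-$\lambda$ Poisson process with jump times $T_{1}<T_{2}<\cdots$ and $(Y_{i})$ are i.i.d.\ with density $f$, independent of $N$. The induced jump measure is then the marked point process $\mathcal{P}_{S}=\sum_{i\ge 1}\delta_{(T_{i},Y_{i})}$, and since $N$ has only finitely many jumps in any bounded time interval almost surely, $\mathcal{P}_{S}(\omega,\cdot)$ is an integer-valued Radon measure, which settles property (i).

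For property (ii) I would first treat a product set $A=(s,t]\times B$ with $B\subset\mathbb{R}^{d}$ measurable. Conditioning on $N(t)-N(s)=n$, each of the $n$ jumps falling in $(s,t]$ has its mark land in $B$ independently with probability $p=\int_{B}f(y)\,dy$, so $\mathcal{P}_{S}(A)$ is conditionally $\mathrm{Binomial}(n,p)$. Using the thinning identity (if $N\sim\mathrm{Poisson}(\nu)$ and $K\mid N\sim\mathrm{Binomial}(N,p)$ then $K\sim\mathrm{Poisson}(\nu p)$) together with $N(t)-N(s)\sim\mathrm{Poisson}(\lambda(t-s))$, I obtain $\mathcal{P}_{S}(A)\sim\mathrm{Poisson}(\lambda(t-s)p)$, whose parameter is exactly $\mu(A)=\lambda(t-s)\int_{B}f(y)\,dy$. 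Extending this from product sets to arbitrary $\mu$-measurable $A$ is a routine monotone-class argument.

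The cleanest way to obtain property (iii) — and indeed to re-derive (ii) simultaneously — is to compute the Laplace functional. For nonnegative measurable $g$ one has $\int g\,d\mathcal{P}_{S}=\sum_{i}g(T_{i},Y_{i})$, and conditioning on $N(t)=n$, using that the jump times are distributed as the order statistics of $n$ i.i.d.\ uniforms on $[0,t]$ and that the marks are i.i.d.\ with density $f$ and independent of the times, I would show
\[E\!\left[\exp\!\Big(-\!\int g\,d\mathcal{P}_{S}\Big)\right]=\exp\!\Big(-\!\int\left(1-e^{-g(s,y)}\right)\lambda f(y)\,dy\,ds\Big).\]
This functional form characterizes a Poisson random measure with intensity $\mu$, and in particular forces the counts over disjoint sets to be independent and Poisson-distributed. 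The hard part will be this last computation: I must carefully justify the passage from the finite-horizon conditional (order-statistics) representation to the whole half-line and interchange the limit with the expectation, which rests on the local finiteness established in property (i) together with monotone convergence. Once the Laplace functional is in hand, the independence and the Poisson marginals of property (iii) both follow immediately by specializing $g$ to finite nonnegative combinations of indicators of disjoint sets.
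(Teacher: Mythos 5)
Your argument is correct in outline, but note that the paper does not prove this statement at all: it is imported verbatim as Proposition 3.5 of \cite{cont2003financial} and used as a black box to identify the intensity measure needed for simulating $\tilde{\mathcal{P}}_{S}$. So there is no internal proof to compare against; what you have written is a self-contained derivation of the cited textbook fact. Your route --- representing $\mathcal{P}_{S}=\sum_{i\ge 1}\delta_{(T_{i},Y_{i})}$ as a marked point process, getting the Poisson marginals on product sets by thinning, and then establishing everything at once through the Laplace functional via the order-statistics representation of the jump times --- is the standard proof and is sound. One small caution: the ``routine monotone-class argument'' you invoke to extend property (ii) from product sets to arbitrary measurable sets is not routine if done in isolation, because the family of sets on which the count is Poisson-distributed is not obviously stable under the set operations a Dynkin-system argument requires (a difference of two sets with Poisson counts need not have a Poisson count absent independence). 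This is exactly why your second pass through the Laplace functional is the right move --- it characterizes the law of the whole random measure and delivers (ii) and (iii) for all measurable sets simultaneously --- so you should lean on it entirely and drop the separate monotone-class claim, or at least phrase the product-set computation as motivation rather than as a completed proof of (ii). With that adjustment the proof is complete, and it adds content the paper deliberately outsources to \cite{cont2003financial}.
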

According to Proposition \ref{thm:5:1}, for a compound Poisson process $S(t)$, the compensated Poisson random measure $\mathcal{\tilde{P}}_{S}$ induced by $S(t)$ can be simulated by $\mathcal{\tilde{P}}_{S}=\mathcal{P}_{S}-\lambda f\left( {y} \right)dydt$. Although $X(t)$ given by \eqref{eq:5:2} is not a compound process, $S(t)=\log(X(t))$ is a compound Poisson process, and thus its Poisson random measure $\mathcal{P}_{S}(t,y)$ can be easily simulated according to Proposition \ref{thm:5:1}. Moreover, the simplicity of function $S(\cdot)$ guarantees that the function $g$ in Theorem \ref{thm:derivative} can be easily determined, which is essential for the construction of bases $\rho^{W}$ and $\rho^{\mathcal{P}}$, as we will elaborate on this point later. Now if we incorporate $\mathcal{P}_{S}$ into the asset-price dynamics \eqref{eq:5:1}, we can obtain an equivalent dynamics as
\begin{equation}\label{eq:5:2}
\frac{dX\left( t \right)}{X\left(t^{-}\right)} = \left( r-\delta \right)dt + \sigma dW\left( t \right) + \int_{\mathbb{R}^{d}}y\mathcal{P}_{S}\left(t,y\right).
\end{equation}
\\
Unfortunately, the solution to asset dynamics \eqref{eq:5:1} or \eqref{eq:5:2} is not uniquely determined in the risk-neutral sense, caused by the incompleteness of the market under the jump-diffusion setting. However, we can construct different pricing measures $\mathbb{Q }'s \sim \mathbb{P}$ such that the discounted price $\hat{X}(t)$ is a martingale under $\mathbb{Q}'s$ (c.f. Chapter 10 in \cite{cont2003financial}). Here we will adopt the construction method proposed by \cite{merton1976option}. That is, changing the drift of the Wiener process but leaving other components of \eqref{eq:5:1} unchanged to offset the jump results in a risk-neutral measure $\mathbb{Q}_{M}$, which is a generalization of the unique risk-neutral measure under the Black-Scholes model. Therefore, the solution under $\mathbb{Q}_{M}$ can be easily derived and efficiently simulated. Precisely, the solution to the asset-price dynamics \eqref{eq:5:1} is given by:
\begin{equation}\label{eq:5:3}
X\left(t\right)=X_{0}\exp\biggl[\mu^{M}t+\sigma W^{M}\left(t\right)+\sum\limits^{P\left(t\right)}_{i=1}J_{i}\biggr],\quad t>0,
\end{equation}
where $\mu^{M}=r-\delta-\frac{1}{2}\sigma^2-E\left[e^{J_{i}}-1\right]$ is the new drift, $W^{M}(t)$ is a standard vector Wiener process and $J_{i}'s$ are the i.i.d. random variables according to $J$.\\
\\
Given the equivalence of \eqref{eq:5:1} and \eqref{eq:5:3}, we can simulate $X(t)$ under the risk-neutral measure $\mathbb{Q}_{M}$ by simulating $S(t)$ and its associated Poisson random measure $\mathcal{P}_{S}$. Specifically, we perform the Euler scheme on an equidistant partition $\bar{\pi}$ with $|\bar{\pi}|=0.01$ and a continuously equi-probabilistic partition on $\mathbb{R}^{d}$ with $|\bar{\mathcal{A}}|=0.1$ to simulate the Wiener increments $\{W_{t_{l}}\}$, the Poisson random measure increments $P([t_{l},t_{l+1}]\times A_{k})$, and the resulting sample paths of $X(t)=\exp(S(t))$ according to \eqref{eq:5:3}.\\
%Easy to notice that, if we let $J=0, w.p. 1$, then the asset dynamic \eqref{eq:5:2} regrades to the Black-Scholes model and the solution \eqref{eq:5:3} regrades to the corresponding unique solution of the Black-Scholes model.
\\
We consider a Bermudan Min-Puts on the $n$ assets with price vector $\{X_{1}\left(t\right),...,X_{n}\left(t\right)\}$. In particular, at any time $t \in \Xi  = \{ {T_0},{T_1},...,{T_\mathcal{J} }\}$, the option holder has the right to exercise the option to receive the payoff
\[ h\left(X\left(t\right)\right)=\left(SK-\min\left(X_{1}\left(t\right),...,X_{n}\left(t\right)\right)\right)^{+} ,\]
where $SK$ represents the strike price. The maturity time of the option is $T=1$ and can be exercised at 11 equally-spaced time points, i.e., $T_{j}=j\times T/10, \; j=0,...,10$. Our objective is to solve the Bermudan option pricing problem by providing a lower bound and an upper bound on the exact option price.
\subsection{Suboptimal Exercise Strategies and Lower Bounds}
First, let's adopt the L-S algorithm to generate a suboptimal exercise strategy $\tilde{\tau}$ by regressing the continuation values, and compute the corresponding benchmark lower bound. It turns out this algorithm will be very effective if one can construct good function bases for regression in the sense that the function bases should capture sufficient features of the continuation values. In particular, \cite{andersen:2004} propose a function basis consisting of all monomials of underlying asset prices with degrees less than or equal to three, the European min-put option with maturity T, its square and its cube. Numerical results show that this function basis works extremely well for Bermudan option pricing problems under the pure-diffusion models. The reason is that European option price under the pure-diffusion models can be fast computed via its closed-form, and capture sufficient features of the Bermudan option.\\
\\
For the Bermudan option pricing problem under the jump-diffusion model \eqref{eq:5:1}, the corresponding European option still has a closed-form expression. Specifically, under jump-diffusion model \eqref{eq:5:1}, the explicit form of the European option on $X$ at time $t$ with maturity $T$, denoted by $C^{M}(t,X;T)$, is given by:
\begin{equation}\label{eq:5:4}
C^{M}(t,X;T)= E_{\mathbb{Q}_{M}}\left[h(X_{T})|\mathcal{F}_{t}\right]= \sum_{k \geq 0}\frac{e^{-\lambda(T-t)}\left(\lambda(T-t)\right)^{k}}{k!}C_{\sigma_{k}}^{BS}\left(0,X_{k};T-t\right),
\end{equation}
where $\sigma_{k}^{2}=\sigma^{2}+k\theta^{2}/\left(T-t\right)$,
$X_{k}=X\exp\left(k(m+\frac{\theta^{2}}{2})-\lambda[\exp(m+\frac{\theta^{2}}{2})-1](T-t)\right)$
and
\begin{equation}\label{eq:5:5}
\begin{split}C_{\sigma}^{BS}(0,X;\tau)
&=E\left[H(Xe^{(r-\delta-\frac{\sigma^{2}}{2})\tau+ \sigma W_{\tau}})\right]\\
&=-\sum_{l=1}^{n}X^{l}\frac{e^{-\delta\tau}}{\sqrt{2\pi}}\int_{(-\infty,d^{l}_{-}]}\exp\left(-\frac{z^2}{2}\right)\prod_{l^{\prime}=1,l^{\prime}\neq l}^{n}
\mathcal{N}\left(\frac{\ln\frac{X^{l^{\prime}}}{X^{l}}}{\sigma \sqrt{\tau}}-z-\sigma \sqrt{\tau}\right)dz \\
&\quad + e^{-r\tau} \cdot SK \cdot \left(1-\prod_{l=1}^{n}\left(1-\mathcal{N}\left(d^{l}_{+}\right)\right)\right),\\
\end{split}
\end{equation}
with
\begin{equation}\nonumber
d^{l}_{+}=\frac{\ln\frac{SK}{X^{l}}-\left(r-\delta-\frac{\sigma^{2}}{2}\right)\tau}{\sigma \sqrt{\tau}},\quad\quad d^{l}_{-}=d^{l}_{+}-\sigma \sqrt{\tau},
\end{equation}
and $\mathcal{N}$ denoting the c.d.f. of standard normal distribution, $C_{\sigma}^{BS}(0,X;\tau)$ denoting the European option under the Black-Scholes model with maturity $\tau$, volatility $\sigma$ and initial price $X$. Unfortunately, it is extremely difficult to compute $C^{M}(t,X;T)$ in \eqref{eq:5:4} exactly because of the infinite sum in \eqref{eq:5:4}. A natural modification is to approximate it numerically by finite truncation the summation in \eqref{eq:5:4} and some approximation of the integral in \eqref{eq:5:5}. Naturally, we try to approximate it directly by an European option under a closely-related pure-diffusion model. Surprisingly, this most intuitive one, i.e., the European option under the pure-diffusion model derived simply by removing the jump part of \eqref{eq:5:1} works extremely well in our numerical experiments. To avoid confusion, we refer to it as ``non-jump European option''.  \\
\\
Now the function basis we choose includes all monomials of underlying asset prices with degrees less than or equal to three, the non-jump European option with maturity T, its square and its cube. With this basis, we implement the least-squares regression algorithm, and obtain suboptimal exercise strategies $\tilde{\tau}'s$ and the corresponding lower bounds, as shown in Table \ref{table:5:1}.
\subsection{Benchmark Upper Bounds}
After obtaining the suboptimal exercise strategies $\tilde{\tau}'s$, we adopt the A-B algorithm with nested Monte Carlo simulation to compute the benchmark upper bounds. We report the numerical results in Table \ref{table:5:1}, in which we can see that the A-B algorithm yields extremely tight upper bounds with small duality gaps. This observation indicates two facts. First, the suboptimal exercise strategies $\tilde{\tau}'s$ constructed with the new function basis are nearly optimal, which are crucial for the successful implementation of the T-M algorithm because we also need $\tilde{\tau}'s$ to estimate the regression coefficients $\hat{\alpha}$ and $\hat{\beta}$. Moreover, we will construct the function bases $\rho^{W}$ and $\rho^{\mathcal{P}}$ using the new function basis as a starting point, therefore the effectiveness of the new function basis is a positive indicator for the sufficiency of bases $\rho^{W}$ and $\rho^{\mathcal{P}}$ in terms of capturing features. Second, the A-B algorithm is still very effective under the jump-diffusion models, regardless of the considerable computational effort caused by the nested simulation.
\subsection{Upper Bound by True Martingale Approach}
Next, we will address the computational inefficiency suffered by the A-B algorithm by implementing our proposed T-M algorithm (Algorithm \ref{alg:2}) described in section 4. Notice that we have addressed almost all the implementation details except the choices of the partitions $\pi$ and $\mathcal{A}$, and the bases $\rho^{W}$ and $\rho^{\mathcal{P}}$.\\
\\
First of all, the choice of the partition $\pi$ is essential to balance the tradeoff between the quality of the true martingale approximation and the computational efficiency. It has to be sufficiently small to reduce the overall mean square error between the true martingale approximation and the objective martingale, but not too small so that the computational effort for obtaining martingale approximation $M^{\pi,\mathcal{A}}$ is much less than the computational effort for obtaining the inner sample paths in the A-B algorithm. In fact, a good way to achieve this tradeoff is to perform the regression on a rough partition in the beginning, and then interpolate them piece-wisely constant to a finer partition. To maximize the effect of this two-layer regression-interpolation technique, we choose to perform the regression procedure on the original exercisable dates $\Xi  = \{ {T_0},{T_1},...,{T_\mathcal{J}-1 }\}$ and interpolate the regression coefficients piece-wisely constant to the partition $\bar{\pi}$ of the Euler scheme. Secondly, the choice of the partition $\mathcal{A}$ is less restrictive than the choice of $\pi$ since $|\pi|$ will dominate the error between the martingale approximation and the original martingale (see Theorem\eqref{thm:3:4}) regardless of the choice of $\mathcal{A}$. For the sake of convenience, we let $\mathcal{A}=\mathcal{\bar{A}}$. Therefore the compensated Poisson increments $\{\tilde{P}\left( {\left[ {{t_l},{t_{l + 1}}} \right] \times {A_k}} \right)\}$ in \eqref{eq:4:2} are obtained immediately from the simulation of $X^{\bar{\pi},\mathcal{\bar{A}}}$, and $\mu \left( {\left[ {t_{l},t_{l+1}} \right] \times A_{k}} \right)$ in \eqref{eq:4:2} equals $\lambda\times0.01\times 0.1$ (see Proposition \ref{thm:5:1}). Specifically, we obtain $\{{{\hat \alpha }_{{T_j}}}, j=0,...,\mathcal{J}-1\}$ and $\{{{\hat \beta }_{{T_j},k}},j=0,...,\mathcal{J}-1, k=1,...,\mathcal{K}\}$ via the regression \eqref{eq:4:2}, and set $\hat{\alpha}_{t_{l}}={{\hat \alpha }_{{T_j}}}$ for $t_{l} \in [T_{j},T_{j+1})$ and $\hat{\beta}_{t_{l},k}={{\hat \beta }_{{T_j},k}}$ for $ t_{l} \in [T_{j},T_{j+1}), k=1,...,\mathcal{K}$.\\
\\
Finally, the choice of the bases $\rho^{W}$ and $\rho^{\mathcal{P}}$ affects the accuracy of the martingale approximation $\hat{M}^{\pi,\mathcal{A}}$. According to Theorem \ref{thm:derivative}, the bases $\rho^{W}$ should capture sufficient features of the derivative of the Bermudan option price, while the bases $\rho^{\mathcal{P}}$ should capture sufficient features of the increment of the Bermudan option price caused by the jump. As we showed earlier, the non-jump European option is a good basis function for the Bermudan option price. Therefore, if we apply Ito's lemma on the non-jump European option, we expect the resulting integrands to be good starting points for the bases $\rho^{W}$ and $\rho^{\mathcal{P}}$. Precisely, for $0\leq j\leq \mathcal{J}$, we have
\begin{equation}\label{eq:5:6}
\begin{split}C^{M}\left(t,X_{t};T_{j}\right)
\quad :&=E_{\mathbb{Q}_{M}}\left[h(X_{T_{j}})|\mathcal{F}_{t}\right]\\
&= h(X_{T_{j}})-\int^{T_{j}}_{t}\frac{\partial C^{M}\left(u,X_{u^{-}};T_{j}\right)}{\partial X}X_{u^{-}}\sigma d W_{u}^{M}\\
&\quad -\int^{T_{j}}_{t}\int_{\mathbb{R}^{d}}\left[C^{M}\left(u,X_{u^{-}}\cdot e^{y};T_{j}\right)-C^{M}\left(u,X_{u^{-}};T_{j}\right)\right]\tilde{\mathcal{P}}_{S}
\left(du, dy\right).
\end{split}
\end{equation}
Therefore, the function basis $\rho^{W}\left(t,X_{t^{-}}\right)$ should include the derivative of $C^{BS}\left(t,X_{t^{-}};\cdot\right)$, while the function basis $\rho^{\mathcal{P}}(t,y,X_{t^{-}})$ should include $C^{BS}\left(t,X_{t^{-}}\cdot e^{y};\cdot\right)-C^{BS}\left(t,X_{t^{-}};\cdot\right)$. After simple numerical tests, we find out (see Basis 4 in Table \ref{table:5:3}), for $t\in[T_{j},T_{j+1})$ and $1\leq k\leq \mathcal{K}$, $\rho^{W}\left(t,X_{t^{-}}\right)$ consisting of 1, $\frac{\partial C^{BS}\left(t,X_{t^{-}};T_{j+1}\right)}{\partial X}X_{t^{-}}$ and $\frac{\partial C^{BS}\left(t,X_{t^{-}};T\right)}{\partial X}X_{t^{-}}$, $\rho^{\mathcal{P}}(t,y_{k}, X_{t^{-}})$ consisting of 1, $C^{BS}\left(t,X_{t^{-}}\cdot e^{y_{k}};T_{j+1}\right)-C^{BS}\left(t,X_{t^{-}};T_{j+1}\right)$ and $C^{BS}\left(t,X_{t^{-}}\cdot e^{y_{k}};T\right)-C^{BS}\left(t,X_{t^{-}};T\right)$ yield the tightest upper bounds, where $y_{k}\in A_{k}$ is a representative value.\\
\\
We report the numerical results on the lower bounds by the L-S algorithm, the benchmark upper bounds by the A-B algorithm and the true upper bounds by the T-M algorithm in Table \ref{table:5:1}. The small gaps between the lower bounds and the true upper bounds indicate that the T-M algorithm is quite effective in terms of generating tight true upper bounds. The small length of the confidence intervals of the true upper bounds indicates that T-M algorithm generates good approximations of the optimal dual martingales. The CPU time ratios indicate that T-M algorithm achieves a much higher numerical efficiency compared with the A-B algorithm.
\begin{table}[h]\label{table:5:1}
\centering
\begin{threeparttable}
\caption{Bounds (with 95$\%$ confidence intervals) for Bermudan Min-put options}\label{table:5:1}
\begin{tabular}{ccccccc}
\hline\hline
    &           &            &Lower Bound          &Upper Bound        &Benchmark U-B      &\footnotesize{CPU Time Ratio} \\
$n$ &$\lambda$  &$X_{0}$     &(L-S algorithm)      &(T-M algorithm)    &(A-B algorithm)    &(T-M vs A-B)   \\ \hline
\\[0.5ex]
$1$ &$1$        &$36$        &$5.842\pm0.031$      &$5.970\pm0.031$    &$5.899\pm0.038$    &$\approx$ 1:400   \\
$1$ &$1$        &$40$        &$3.791\pm0.028$      &$3.910\pm0.033$    &$3.856\pm0.036$    &$\approx$ 1:400   \\
$1$ &$1$        &$44$        &$2.383\pm0.024$      &$2.443\pm0.028$    &$2.417\pm0.033$    &$\approx$ 1:400   \\
\\
$1$ &$3$        &$36$        &$7.702\pm0.043$      &$7.899\pm0.030$    &$7.810\pm0.053$    &$\approx$ 1:400   \\
$1$ &$3$        &$40$        &$5.817\pm0.039$      &$5.996\pm0.047$    &$5.894\pm0.050$    &$\approx$ 1:400   \\
$1$ &$3$        &$44$        &$4.352\pm0.036$      &$4.480\pm0.044$    &$4.440\pm0.040$    &$\approx$ 1:400   \\
\\
$2$ &$1$        &$36$        &$8.133\pm0.033$      &$8.308\pm0.045$    &$8.243\pm0.040$    &$\approx$ 1:9   \\
$2$ &$1$        &$40$        &$5.691\pm0.034$      &$5.785\pm0.040$    &$5.755\pm0.043$    &$\approx$ 1:9   \\
$2$ &$1$        &$44$        &$3.765\pm0.028$      &$3.842\pm0.036$    &$3.804\pm0.038$    &$\approx$ 1:9   \\
\\
$2$ &$3$        &$36$        &$9.786\pm0.045$      &$10.038\pm0.061$   &$9.989\pm0.057$    &$\approx$ 1:9   \\
$2$ &$3$        &$40$        &$7.680\pm0.043$      &$7.900\pm0.060$    &$7.845\pm0.057$    &$\approx$ 1:9   \\
$2$ &$3$        &$44$        &$5.941\pm0.040$      &$6.118\pm0.058$    &$6.065\pm0.058$    &$\approx$ 1:9   \\
\hline
\end{tabular}
\begin{tablenotes}
\footnotesize
\item \quad The payoff of the min-put option is:
$(SK-min(X_{1}(t),...,X_{n}(t)))^{+}$. The parameters are: $SK=40, r=4\%,\delta=0,\sigma=20\%,m=6\%,\theta=20\%,T=1,\mathcal{J}=10.$ The jump intensity $\lambda$ is 1 or 3 and the initial price is $X_{0}=(X,...,X)$ with $X=$36, 40 or 44, as shown in the table. We use $N=5\times 10^{4}$ sample paths to estimate the regression coefficients to determine the suboptimal exercise strategy, and we use $N=5\times 10^{4}$ sample paths to estimate the coefficients $\hat{\alpha}$ and $\hat{\beta}$. We use $N_{1}=10^{5}$ sample paths to determine the lower bounds. For the implementation of the A-B algorithm, we use $N_{2}= 10^{3}$ outer sample paths and $N_{3}=5\times 10^{2}$ inner sample paths to determine the benchmark upper bounds and the confidence intervals of appropriate length. For the implementation of the T-M algorithm, we use $\bar{N}=2.5\times 10^{3}$ sample paths to determine the true upper bounds and the confidence intervals of appropriate length.
\end{tablenotes}
\end{threeparttable}
\end{table}
\\
\\
It is instructive to theoretically compare the computational complexity of the T-M algorithm and the A-B algorithm, since the CPU time ratios in Table \ref{table:5:1} are quite different for 1-dimensional problems and 2-dimensional problems. We can use $EE+ES$ to represent the total CPU time, where $EE$ represents the CPU time for evaluating the European option prices and $ES$ represents the CPU time for exercising the strategy along all the sample paths. Simple numerical tests show that when $n=1$, the term $ES$ will dominate the total CPU time because the European option price can be evaluated extremely fast, since the integral in \eqref{eq:5:5} reduces to the c.d.f. of a standard normal distribution. Therefore, the CPU time ratio will be in the order of number ratio of sample paths, which is consistent with the result ($\approx$ 1:400). However, when $n \ge 2$, the term $EE$ will be dominant over $ES$ because the evaluation of the European option price consumes over 95$\%$ of the total CPU time, caused by the evaluation of the integral in \eqref{eq:5:5}. Hence, we should compare the total evaluation times of the European option price for both algorithms to estimate the CPU time ratio. For the A-B algorithm, the total evaluation times is in the order of $(N_{2}\times N_{3} \times \mathcal{J} \times \mathcal{J})$, which is quadratic in the number of exercisable periods $\mathcal{J}$ and will be significantly amplified by the two-layer simulation. For the T-M algorithm, the total evaluation times is in the order of $(\bar{N}\times \mathcal{L}\times\mathcal{K})$, which is linear in the number of exercisable periods, since $\mathcal{L}$ is a linear function of $\mathcal{J}$ (in our case $\mathcal{L}=10\mathcal{J}$) and $\mathcal{K}$ is usually quite small (in our case $\mathcal{K}=10$). Therefore T-M algorithm can achieve a higher order of computational efficiency, which is verified by the numerical results. We can expect that the CPU time ratios (T-M algorithm versus A-B algorithm) to further decrease when the number of exercisable periods increases, and remain stable if we increase the dimension of the problem.\\
\\
An interesting experiment has been conducted to exhibit the quality differences between the upper bounds generated by different bases $\rho^{W}$ and $\rho^{\mathcal{P}}$, and we report the results in Table \ref{table:5:2}. Specifically, the upper bounds of different levels of quality are generated using various bases that are presented in Table \ref{table:5:3}. To summarize, the simplest basis, i.e., $\{1\}$ (Bases 1), results in very poor upper bounds; the standard basis one can come up with, i.e., the polynomials (Bases 2), improves the upper bounds significantly. But the gap is still too large; however, the upper bounds get almost no improvement after we use the non-jump European options as the basis (Bases 3), which indicates that the non-jump European option does not further provide useful features. Finally, the basis $\rho^{W}$ consisting of the deltas of the non-jump European options and the basis $\rho^{\mathcal{P}}$ consisting of the non-jump European option increments (Bases 4) yield desirable upper bounds. These results verify the theoretical analysis about the structure of the optimal dual martingales, as shown in Theorem \ref{thm:derivative}.
\begin{table}[htb]
\centering
\caption{Upper Bounds for different function bases}\label{table:5:2}
\begin{tabular}{cccccc}
\hline\hline
$\lambda$  &$X_{0}$     &Bases 1              &Bases 2            &Bases 3           &Bases 4            \\ \hline
\\
$1$        &$36$        &$6.730\pm0.069$      &$6.283\pm0.042$    &$6.228\pm0.048$   &$5.970\pm0.031$    \\
$1$        &$40$        &$4.789\pm0.074$      &$4.228\pm0.039$    &$4.127\pm0.047$   &$3.910\pm0.033$    \\
$1$        &$44$        &$3.344\pm0.073$      &$2.734\pm0.038$    &$2.665\pm0.044$   &$2.443\pm0.028$    \\
\\
$3$        &$36$        &$8.829\pm0.091$      &$8.338\pm0.059$    &$8.167\pm0.062$   &$7.899\pm0.030$    \\
$3$        &$40$        &$7.086\pm0.101$      &$6.377\pm0.060$    &$6.277\pm0.067$   &$5.996\pm0.047$    \\
$3$        &$44$        &$5.681\pm0.100$      &$4.953\pm0.057$    &$4.752\pm0.061$   &$4.480\pm0.044$    \\
\hline
\end{tabular}
\end{table}
\begin{table}[htb]
\centering
\caption{Explicit description of function Bases in Table 5.2}\label{table:5:3}
\begin{tabular}{cll}
\hline\hline
\\
Bases       &\small{$\rho^{W}\left(t,x\right)$ with $t\in[T_{j-1},T_{j})$}  &\small{$\rho^{\mathcal{P}}\left(t,y_{k},x\right)$ with $t\in\left[T_{j-1},T_{j}\right)$, $1\leq k\leq \mathcal{K}$} \\ \hline
\\
Bases 1     &$\{1\}$                            &$\{1\}$ \\
Bases 2     &$\{1,\; x,\; x^{2},\;x^{3}\}$      &$\{1,\; x,\; x^{2},\;x^{3}\}$ \\
Bases 3     &\scriptsize{$\{1,\;C^{BS}\left(t,x;T\right),\; \bigl(C^{BS}\left(t,x;T\right)\bigr)^{2}\}$}  & \scriptsize{$\{1,\;C^{BS}\left(t,x;T\right),\; \bigl(C^{BS}\left(t,x;T\right)\bigr)^{2}\}$} \\
Bases 4     &\scriptsize{$\{1,\frac{\partial C^{BS}\left(t,x;T_{j+1}\right)}{\partial x}x, \frac{\partial C^{BS}\left(t,x;T\right)}{\partial x} x \}$}   &\tiny{$\{1, C^{BS}\left(t,x\cdot e^{y_{k}};T_{j+1}\right)-C^{BS}\left(t,x;T_{j+1}\right),C^{BS}\left(t,x\cdot e^{y_{k}};T\right)-C^{BS}\left(t,x;T\right)\}$}\\
\hline
\end{tabular}
\end{table}
\\
Another interesting experiment has been conducted to investigate the individual performance of the two terms in \eqref{eq:4:4} since each term individually is a well-defined true martingale (adapted to the filtration $\{\mathcal{F}_{T_{j}}:j=0,...,\mathcal{J}\}$). Specifically, if
\begin{equation*}
\begin{split}
\hat{M}^{\pi,\mathcal{A}}_{{T_j}}&= \sum\limits_{0 \le {t_l} < {T_j}} {{{\hat \phi }^{\pi ,\mathcal{A}}}\left( {{t_l},{X^{\bar{\pi},\bar{\mathcal{A}}}_{{t_l}}}} \right)} \left( {{\Delta ^\pi }{W_l}} \right) + \sum\limits_{0 \le {t_l} < {T_j}} {\sum\limits_{k = 1}^\mathcal{K} {{{\hat \psi }^{\pi ,\mathcal{A}}}\left( {{t_l},y_{k},{X^{\bar{\pi},\bar{\mathcal{A}}}_{{t_l}}}} \right)\tilde{P}\left( {\left[ {{t_l},{t_{l + 1}}} \right] \times {A_k}} \right)} }\\
&=(Term \;1) \; + \; (Term \; 2),
\end{split}
\end{equation*}
then both Term 1 and Term 2 are martingales adapted to the filtration $\{\mathcal{F}_{T_{j}}:j=0,...,\mathcal{J}\}$. Therefore both of them will induce true upper bounds. Results in Table \ref{eq:5:4} show that taking out either one of these two terms yields significantly poorer upper bounds with much worse confidence intervals, which implies that the effort we have spent on the regression coefficients $\hat{\alpha}$ and $\hat{\beta}$, and the construction of the martingale $M^{\pi,\mathcal{A}}$ is necessary and time-worthy.
\begin{table}[htb]
\centering
\caption{Upper Bounds by one term in the True Martingale}\label{table:5:4}
\begin{tabular}{cccccc}
\hline\hline
$\lambda$  &$X_{0}$     &Term 1              &Term 2              &Complete Martingale            \\ \hline
\\
$1$        &$36$        &$6.863\pm0.059$      &$7.930\pm0.073$    &$5.970\pm0.031$    \\
$1$        &$40$        &$4.450\pm0.056$      &$5.184\pm0.072$    &$3.910\pm0.033$    \\
$1$        &$44$        &$2.750\pm0.050$      &$3.125\pm0.064$    &$2.443\pm0.028$    \\
\\
$3$        &$36$        &$10.101\pm0.099$     &$9.304\pm0.070$    &$7.899\pm0.030$    \\
$3$        &$40$        &$7.776\pm0.103$      &$7.047\pm0.070$    &$5.996\pm0.047$    \\
$3$        &$44$        &$5.747\pm0.098$      &$5.244\pm0.066$    &$4.480\pm0.044$    \\
\hline
\end{tabular}
\end{table}
%\\
%\begin{rem}\label{rem:5:1}
%Under the no-jump Black-Scholes model, for the Bermudan option pricing problem, \cite{Belomestny:2009} apply the Ito's lemma on the European option to obtain the deltas as the basis $\rho^{W}$. For the jump-diffusion model here, we generalize their idea by properly choosing the Poisson random measure and applying Ito's lemma on the ``Non-jump European option'' to obtain the deltas and jump differences as the bases $\rho^{W}$ and $\rho^{\mathcal{P}}$ respectively.
%\end{rem}
\section{Conclusion and Future directions}
In this paper, we propose a true martingale algorithm (T-M algorithm) to fast compute the upper bounds on Bermudan option price under the jump-diffusion models, as an alternative approach for the classic A-B algorithm proposed by \cite{andersen:2004}, especially when the computational budget is limited. The theoretical analysis of our algorithm proves and the numerical results verify that our algorithm generates stable and tight upper bounds with significant reduction of computational effort. Moreover, we explore the structure of the optimal dual martingale for the dual problem and provide an intuitive understanding towards the construction of good approximations of the optimal dual martingale over the space of all adapted martingales.\\
\\
Furthermore, from the information relaxation point of view (see \cite{brown:2010}), we can gain an intuitive understanding towards the structure of the optimal penalty function. It inspires us to construct good penalty functions over the space of ``feasible penalty functions'' for general dynamic programming problems, which is still an open area to explore (see \cite{ye2012parameterized} for some initial exploration).
\bibliographystyle{ormsv080}
\bibliography{Zhou-Bibtex}

\end{document}